\documentclass[10pt,journal,compsoc]{IEEEtran}
\IEEEoverridecommandlockouts
\pdfoutput=1
\usepackage[linesnumbered, vlined, ruled]{algorithm2e}
\usepackage{amsmath,amssymb,amsfonts}
\usepackage{array}
\usepackage[utf8]{inputenc} 
\usepackage{booktabs} 
\usepackage{cite}
\usepackage{graphicx}
\usepackage[noend]{algpseudocode}
\usepackage[table,xcdraw]{xcolor}
\usepackage{textcomp}
\usepackage{subfigure} 
\usepackage{amsthm}
\usepackage{soul}
\newtheorem{theorem}{Theorem}
\newtheorem{lemma}{Lemma}

\definecolor{BrickRed}{RGB}{178,34,34} 

\def\BibTeX{{\rm B\kern-.05em{\sc i\kern-.025em b}\kern-.08em
    T\kern-.1667em\lower.7ex\hbox{E}\kern-.125emX}}

\makeatletter
\def\ps@IEEEtitlepagestyle{%
  \def\@oddfoot{\mycopyrightnotice}%
  \def\@evenfoot{}%
}
\def\mycopyrightnotice{%
  {\footnotesize 
  \begin{minipage}{\textwidth}
  \centering
  \textbf{This work has been submitted to the IEEE for possible publication. Copyright may be transferred without notice, after which this version may no longer be accessible.}
  \end{minipage}
    \hfill}
  \gdef\mycopyrightnotice{}
}

\begin{document}

\title{wChain: A Fast Fault-Tolerant Blockchain Protocol for Multihop Wireless Networks}

\author {Minghui~Xu,~\IEEEmembership{Student Member,~IEEE,}
    Chunchi~Liu, 
    Yifei~Zou, 
    Feng~Zhao,~\IEEEmembership{Member,~IEEE,}
    Jiguo~Yu,~\IEEEmembership{Senior Member,~IEEE,}
    Xiuzhen~Cheng,~\IEEEmembership{Fellow,~IEEE,}
    
\thanks{M. Xu is with the Department of Computer Science, The George Washington University, Washington, DC 20052 USA. E-mail: mhxu@gwu.edu.}
\thanks{C. Liu, Y. Zou, X. Cheng are with the School of Computer Science and Technology, Shandong University, Qingdao, 266510, P.R. China. E-mail: \{liuchunchi@sdu.edu.cn; yfzou@sdu.edu.cn; xzcheng@sdu.edu.cn\}.}
\thanks{F. Zhao (Corresponding Author) is with the Guangxi Colleges and Universities Key Laboratory of Complex System Optimization and Big Data Processing, Yulin Normal University, Yulin, P.R. China. E-mail: zhaofeng@guet.edu.cn.}
\thanks{J. Yu is with the Qilu University of Technology (Shandong Academy of Sciences), Jinan, Shandong, 250353, P.R. China; with Shandong Computer Science Center (National Supercomputer Center in Jinan), Jinan, Shandong, 250014, P.R. China; and with Shandong Laboratory of Computer Networks, Jinan, 250014, China.
Email: jiguoyu@sina.com.}
}

\markboth{Journal of \LaTeX\ Class Files,~Vol.~14, No.~8, August~2015}%
{Shell \MakeLowercase{\textit{et al.}}: Bare Demo of IEEEtran.cls for IEEE Journals}

\maketitle

\begin{abstract}
This paper presents $\mathit{wChain}$, a blockchain protocol specifically designed for multihop wireless networks that deeply integrates wireless communication properties and blockchain technologies under the realistic SINR model. We adopt a hierarchical spanner as the communication backbone to address medium contention and achieve fast data aggregation within $O(\log N\log\Gamma)$ slots where $N$ is the network size and $\Gamma$ refers to the ratio of the maximum distance to the minimum distance between any two nodes. Besides, $\mathit{wChain}$ employs data aggregation and reaggregation, and node recovery mechanisms to ensure efficiency, fault tolerance, persistence, and liveness. The worst-case runtime of $\mathit{wChain}$ is upper bounded by $O(f\log N\log\Gamma)$, where $f=\lfloor \frac{N}{2} \rfloor$ is the upper bound of the number of faulty nodes. To validate our design, we conduct both theoretical analysis and simulation studies, and the results only demonstrate the nice properties of $\mathit{wChain}$, but also point to a vast new space for the exploration of blockchain protocols in wireless networks.
\end{abstract}

\begin{IEEEkeywords}
blockchain, fault-tolerance, multihop wireless networks, SINR model.
\end{IEEEkeywords}

\section{Introduction}

\IEEEPARstart{I}{n} recent years, the popularity of 5G and IoT has arisen more security and privacy issues relevant to identity management, data sharing, and distributed computing in wireless networks. With the inception of Bitcoin, blockchain has been envisioned as a promising technology that can be utilized to support various applications such as online payments and supply chain due to its salient properties of decentralization, immutability, and traceability. 
Correspondingly, effort has been put on protecting wireless applications using the blockchain technology, e.g., mobile edge computing (MEC) \cite{feng2020joint}, intelligent 5G \cite{DBLP:journals/network/DaiXMCHZ19}, vehicular networking \cite{DBLP:journals/winet/MalikNHL20}, and wireless sensor networking (WSN) \cite{8936389}.The common idea of applying blockchain in wireless networks is to introduce trustlessness with blockchain so that functions such as identity management and data sharing become more efficient and secure. 

However, previous studies on blockchain-enabled wireless applications mostly focus on developing practical applications or proposing architectures based on existing blockchain protocols, which were originally designed for wired network applications and thus are not suitable for wireless scenarios. To defend this point of view, let's consider the state-of-the-art blockchain protocols. The concept of proof of physical resources has been widely adopted, e.g., Proof-of-Work (PoW), Proof-of-Space, Proof-of-Elapsed Time, Proof-of-Space Time. The noteworthy drawback of these protocols is that they require high electricity, storage, or specific hardware (e.g., Intel SGX), which wireless devices cannot provide. On the other hand, protocols based on virtual resources such as stake, reputation, or credibility, i.e., Proof-of-Stake, Delegated Proof-of-Stake, Proof-of-Authority, Proof-of-Reputation, etc., always have a complicated design in order to avoid the centralization of wealth or power, which justifies why there is still no such a protocol particularly designed for wireless networks. 

Another line of blockchain protocols, such as Ripple, Algorand, Tendermint, and Hotstuff, rely on message passing. They provide blockchain systems with safety and liveness in confronting faulty nodes or even Byzantine failures. However, when implemented in wireless networks, the following two major problems need to be addressed:
\begin{itemize}
\item Traditional protocols commonly used on the Internet are not efficient enough in wireless networks. For example, PBFT \cite{pbft} and Tendermint \cite{kwon2014tendermint} can reach a consensus within $O(N^2)$ successful transmissions where $N$ is the network size, while Hotstuff \cite{yin2019hotstuff} reduces complexity to $O(N)$ but increases additional overhead due to the introduction of the cryptographic tools. 
\item Transforming fault-tolerant or Byzantine fault-tolerant protocols used in wired networks to wireless environments remains a tricky problem. A recent work by Poirot \textit{et al.} presented a fault-tolerant wireless Paxos for low-power wireless networks \cite{poirot2019paxos}. However, their results still indicate that traditional consensus algorithms require many message exchanges and high bandwidths, which may not be available in wireless networks.  
\end{itemize}

In this paper, we consider designing a message-passing-based blockchain protocol for wireless networks. To achieve this goal, we primarily need to overcome the challenge of properly implementing the medium access control (MAC) layer for blockchain. A promissing idea is to adopt the abstract MAC layer service (absMAC) proposed by Kuhn \textit{et al.} \cite{kuhn2009abstract}, which provides low-level network functions to help design distributed algorithms in wireless networks. The most efficient implementation of absMAC was presented in \cite{yu2018exact}, which achieves the optimal bound for a successful transmission in $O(\Delta+\log N)$ slots via carrier sensing, where $\Delta$ is the maximum number of neighbors a node may have. Obviously, with this efficient implementation and a fault-tolerant consensus algorithm that requires an optimal $O(N)$ successful transmissions (e.g., Hotstuff), the best we can obtain is a protocol with a communication complexity as high as $O(N(\Delta+\log N))$ slots. 

%

Nevertheless, this is far less than satisfactory. One needs a blockchain protocol that can take into consideration the unique features of wireless networking by deeply integrating wireless communications with blockchain to achieve the necessary properties of efficiency, fault-tolerance, persistence, and liveness. Since the most basic primitives heavily used in wireless consensus algorithms are broadcasts and data aggregations, which are respectively responsible for disseminating and collecting opinions or votes from peers, we use a spanner structure to accelerate the consensus process. A spanner is a hierarchical communication backbone that can organize nodes carefully to speed up data aggregation and dissemination processes. It introduces a sparse topology in which only a small number of links need to be maintained such that efficiency and simplicity can be well-balanced. Due to the need for decentralization and practicality, our spanner is constructed in a distributed manner and works under the realistic Interference-plus-Noise-Ratio (SINR) model. Facilitated with the spanner structure, we develop the two primitives of data aggregation and reaggregation to handle interference with an adaptive power scheme directly. These two primitives are adopted by our fault-tolerant $\mathit{wChain}$ protocol to address faulty behaviors caused by fail-stop errors and dynamic topologies such that the properties of resource conservation, fault-tolerance, efficiency, persistence, and liveness can be achieved.

The main contributions of this paper are summarized as follows.
\begin{enumerate}
\item To the best of our knowledge, $\mathit{wChain}$ is the first blockchain protocol that is particularly designed for multihop wireless networks under a realistic SINR model to deeply integrate wireless communications and blockchain technologies. 
\item The $\mathit{wChain}$ protocol ensures high performance by employing a spanner as the communication backbone. The runtime upper bound of the protocol is $O(\log N\log\Gamma)$ when crash failures happen in a low frequency, and the worst-case upper bound is $O(f\log N\log\Gamma)$.
\item Our $\mathit{wChain}$ protocol simultaneously achieves properties of resource conservation, fault-tolerance, efficiency, persistence, and liveness, which are proved by theoretical analysis and verified by simulation studies. 
\end{enumerate}

The rest of the paper is organized as follows. Section~\ref{sec:related:work} introduces the most related work on the state-of-the-art blockchain protocols and consensus algorithms in wireless networks. Section~\ref{sec:model} presents our model and preliminary knowledge. In Section~\ref{sec:protocol}, building blocks including utilities and data aggregation and reaggregation subroutines are first presented, then the three-phase $\mathit{wChain}$ protocol is explained in detail. Our $\mathit{wChain}$ protocol is theoretically analyzed in Section~\ref{sec:analysis} in terms of efficiency, persistence, and liveness. We report the results of our simulation studies in Section~\ref{sec:simulation} and conclude this paper in Section~\ref{sec:discussion}.

\section{Related Work}
\label{sec:related:work}
\textbf{Blockchain protocols for wireless networks.} Blockchain technology has been studied for wireless applications such as mobile edge computing (MEC) \cite{feng2020joint}, intelligent 5G \cite{DBLP:journals/network/DaiXMCHZ19}, vehicular networking \cite{DBLP:journals/winet/MalikNHL20}, secure localization \cite{8936389}, and Wireless D2D Transcoding \cite{liu2020deep}. 
Feng \textit{et al.} \cite{feng2020joint} considered the joint optimization of blockchain and MEC through a radio and computational resource allocation framework. Dai \textit{et al.} \cite{DBLP:journals/network/DaiXMCHZ19} proposed a secure and intelligent architecture for next-generation wireless networks by integrating blockchain and AI technologies. In vehicular ad hoc networks, Malik \textit{et al.} \cite{DBLP:journals/winet/MalikNHL20} utilized blockchain for secure key management. A blockchain-based trust management model was proposed to ensure secure localization in wireless sensor networks \cite{8936389}. A blockchain-enabled Device-to-Device (D2D) transcoding system was developed to provide trustworthy wireless transcoding services in \cite{liu2020deep}. Onireti \textit{et al.} \cite{onireti2019viable} provided an analytic modeling framework to obtain a viable area for wireless PBFT-based blockchain networks. In \cite{ren2018incentive}, a trustless mechanism with PoW-based blockchain was established to incentivize nodes to store data. Liu \textit{et al.} \cite{liu2018joint} realized computation offloading and content caching with mobile edge nodes in wireless blockchain networks. Sun \textit{et al.} \cite{sun2019blockchain} proposed an analytic framework to explore how the performance and security of wireless blockchain systems are affected by wireless communication features  such as SINR. 

Despite these extensive studies on applying blockchain to wireless networks, we are still in dire need of blockchain protocols that are specifically designed for wireless networks, which drives us to investigate and summarize existing distributed leader election and consensus algorithms for wireless networks in the sequel. 

\textbf{Consensus and leader election algorithms for wireless networks.} Consensus and leader election have also been extensively explored in various wireless contexts, and the corresponding solutions can guide us to design wireless blockchains in a proper way. 
Most existing studies on consensus and leader election for wireless networks assume a high-level wireless network abstraction \cite{moniz2012Byzantine} \cite{newport2014consensus} \cite{newport2018fault} \cite{dong2009resilient} \cite{vasudevan2003leader} \cite{raychoudhury2008top} or a realistic model grappling with issues in physical and link layers \cite{chockler2005consensus} \cite{scutari2008distributed} \cite{aysal2009reaching} \cite{richa2011self} \cite{golebiewski2009towards}. 

Moniz \textit{et al.} \cite{moniz2012Byzantine} proposed a BFT consensus protocol with runtime bounded by $O(N^2)$ among $k>\lfloor \frac{N}{2} \rfloor$ nodes in wireless ad hoc networks. They hid physical layer information but let nodes directly use high-level communication primitives. Leveraging the elegance of absMAC, Newport provided upper and lower bounds for distributed consensus in wireless networks \cite{newport2014consensus}. Subsequently, Newport and Robinson proposed a fault-tolerant consensus algorithm that terminates within $O(N^3\log N)$ with unknown network size \cite{newport2018fault}. A fully distributed leader election scheme was proposed to make election value unforgeable and be resistant to jamming attacks in wireless sensor networks. A secure extrema finding algorithm was proposed as a lock-step leader election algorithm that can be transformed to a secure preference-based leader election algorithm with a utility function depicting nodes' preference in wireless ad hoc networks \cite{vasudevan2003leader}. A top $k$-leader election algorithm with faulty nodes was presented in \cite{raychoudhury2008top}.

Chockler \textit{et al.} \cite{chockler2005consensus} investigated the relationship between collision detection and fault-tolerant consensus under a graph-based model. Assuming a graph model with message delays, Scutari and Sergio \cite{scutari2008distributed} proposed a consensus algorithm in wireless sensor networks with multipath fading. Aysal \textit{et al.} \cite{aysal2009reaching} studied the average consensus problem with probabilistic broadcasts under a graph-based model. Richa \textit{et al.} \cite{richa2011self} focused on self-stabilization of leader election for single-hop wireless networks to mitigate jamming attacks by adaptively adjusting the transmission probability at the MAC layer. Go{\l}{\c e}biewski and Klonowski \cite{golebiewski2009towards} proposed a fair leader election scheme in ad-hoc single-hop radio sensor networks, enabling resistance to adversaries who can transmit continuously to block channels or try to forge identities. 

In contrast, we observe that designing a blockchain protocol is more challenging than consensus and leader election in wireless networks. On the one hand, blockchain introduces new data structures such as a chain of blocks, and thus the corresponding computation, communication, and storage overheads should be carefully addressed. On the other hand, it is harder to balance performance and security in blockchain since consensus and leader elections offer more straightforward services without requiring a strong security guarantee. More importantly, Blockchain requires extra design elements (e.g., randomness, transaction and block verification, blockchain update) to guarantee strict persistence and liveness properties. 

\section{Models and Preliminaries}
\label{sec:model}

\textbf{Blockchain Basics.} Each node $v$ maintains a blockchain locally, denoted by $BC_v$, which is a hash-chain of blocks. $B_v^i$ refers to the $i$th block in $BC_v$. We also denote $BC_v^{i+}$ ($BC_v^{i-}$) as the partial blockchain of $BC_v$ before (after) $B_v^i$. Each block contains multiple transactions, and $tx_i^j$ stands for the $j$th transaction in $B_v^i$. Assume the latest block of $BC_v$ is $B_v^k$; then  $v$' $view$ is defined as a tuple $\{seq, hash\}_v$, where $seq$ and $hash$ are the sequence number and block hash of $B_v^k$, respectively. Besides, we adopt the UTXO model due to its remarkable properties such as large scalability and high level of security. We further assume nodes are supported by public key infrastructure, and the cryptographic primitives such as signature leveraged in our design are secure so that no malicious entity can spoof the messages. 

\textbf{Network Model.} We consider a multihop wireless ad hoc network with a set $V$ of $N$ nodes deployed in a 2-dimensional geographic plane. Let $d(u,v)$ denote the Euclidean distance between nodes $u$ and $v$, $D_R(v)$ denote the disk centered at $v$ with a radius $R$, and $N_R(v)$ denote the set of nodes excluding $v$ within $D_R(v)$. For simplicity, we normalize the minimum distance between any two nodes to be 1, and denote by $\Gamma$ the ratio of the maximum distance to the minimum distance between any two nodes. Each node has a unique id and knows no advanced information other than the network size $N$. The transmission power of each node can be controlled for interference mitigation. 
Assume nodes can crash at any time, which means that each node is either functioning normally or completely stop working. A node is regarded as \textit{faulty} if it crashes in the current epoch or does not have the latest view due to crash failures in previous epochs.  Without loss of generality, we assume $N$ is odd. Our protocol can tolerate at most $f$ faulty nodes where $f=\lfloor \frac{N}{2} \rfloor$. 


\textbf{Interference and SINR Model.} We adopt the Signal-to-Interference-plus-Noise-Ratio (SINR) model, which captures the wireless network interference in a more realistic and precise way than a graph-based one. A standard SINR model can be formulated as follows, which states that a message sent by $u$ is correctly received by $v$ if and only if 
\begin{equation}
\begin{aligned}
SINR(u,v)=\frac{P_u\cdot d(u,v)^{-\alpha}}{\mathcal{N}+\sum_{w\in S\setminus\{u\}}P\cdot d(w,v)^{-\alpha}}\geq \beta
\end{aligned}
\end{equation}
holds, where $\mathcal{N}$ is the ambient noise, $\alpha\in (2,6]$ is the path-loss exponent, threshold $\beta>1$ is determined by hardware, and $S\subseteq V$ denotes the set of nodes transmitting simultaneously with $u$. Besides, we further assume that nodes can perform physical carrier sensing. 

\textbf{Maximal Independent Set (MIS).} 
A set $S\subseteq V$ is an \textit{independent set} of $V$ with respect to distance $r$ if for any pair of nodes $u$ and $v$ in $S$, $d(u,v)>r$; and $S$ is referred to as a \textit{maximal independent set} if for any node $w\notin S$, there is a node $x\in S$ such that $d(w,x)\leq r$. MIS has been widely researched in recent years, and there exist a number of  methods computing an MIS in a distributed manner. In this paper, we adopt the approach presented in \cite{7889052}, which computes a distributed MIS in optimal time $O(\log N)$ if the nodes' density is a constant. 

\begin{figure*}[!htb]
\centering
\subfigure[Construction of the first two layers of a spanner followed by an MIS; here hollow and solid circles represent $V_0$ and $V_1$, respectively.]{
    \label{fig:spanner_two_layer}
    \centering
    \includegraphics[width=0.35\textwidth]{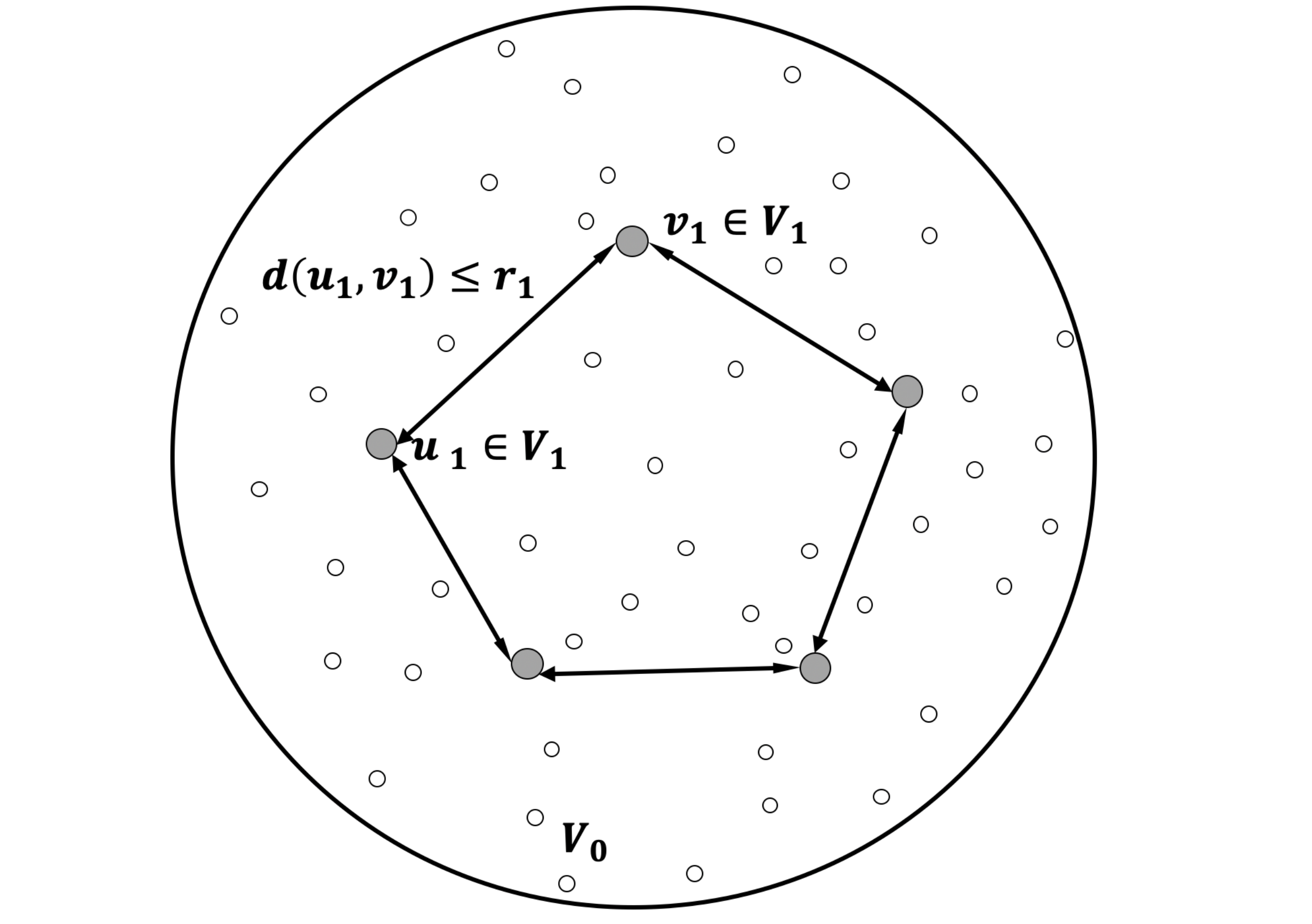}
}%
\hspace{.8in}
\subfigure[A ($\log\Gamma+1$)-level full spanner where arrows stand for the MIS relationship: each MIS node at a certain level covers a set of nodes in the next lower level.]{
    \label{fig:spanner__layer}
    \centering
    \includegraphics[width=0.35\textwidth]{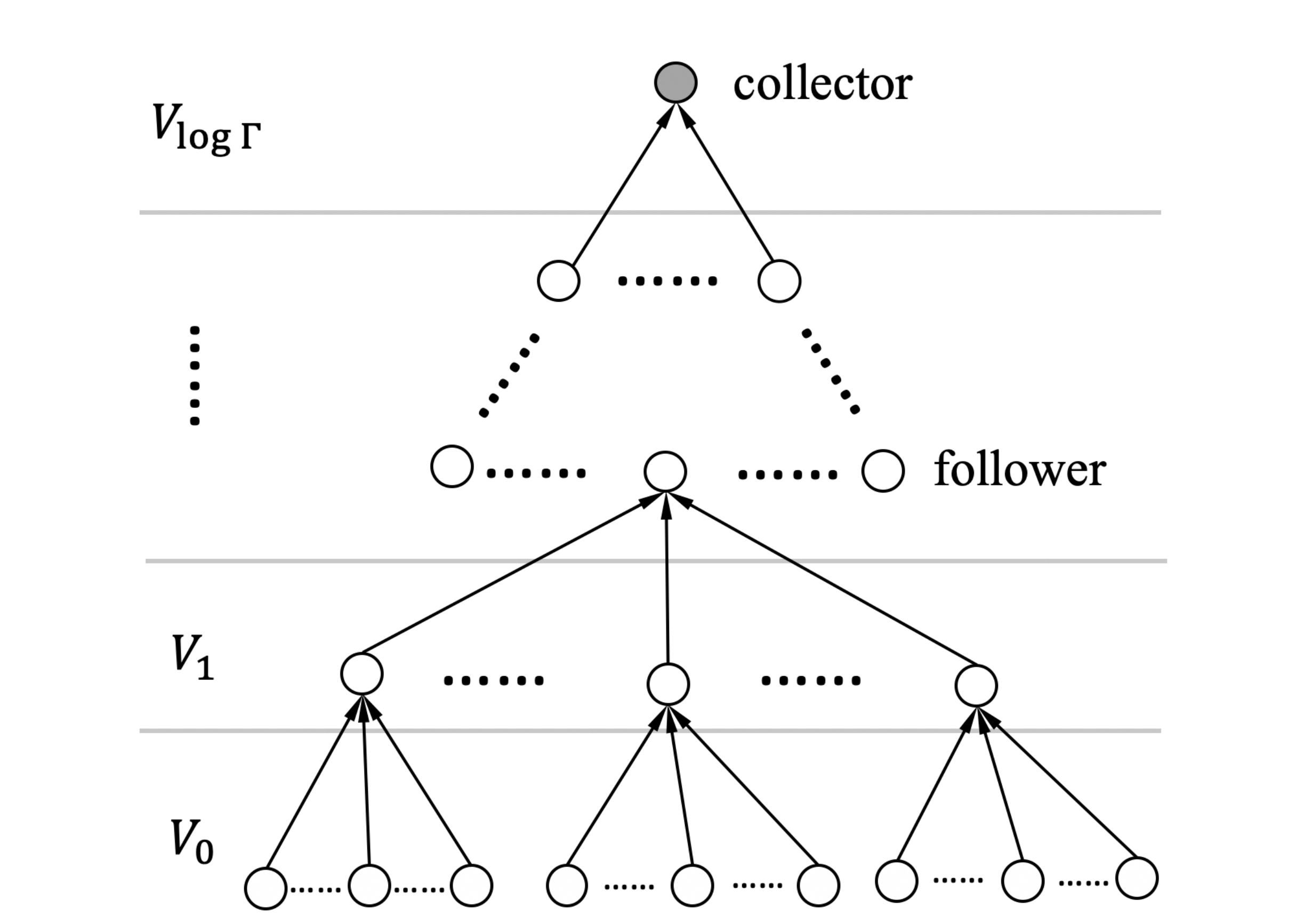}
}%
\caption{Spanner visualization.}
\label{fig:spanner}
\end{figure*}

\textbf{Spanner Construction.} A spanner is a network backbone possessing the following properties: it only needs to maintain a small number of links, and can balance well between efficiency and simplicity compared to other topologies. Taking advantages of these features we employ a spanner to facilitate the deployment of our data aggregation algorithm in $\mathit{wChain}$. More specifically, we adopt the distributed spanner construction algorithm presented in \cite{7889052}, which can construct a sparse spanner, denoted as $H$, with a bounded maximum degree, in $O(\log N \log \Gamma)$ slots with a high probability. 
As illustrated in Fig.~\ref{fig:spanner}, the construction process of $H$ contains $\log\Gamma$ rounds; and in the $i$th round, where $i=1,2,\cdots,\log\Gamma$, $V_i$ contains the nodes in a maximal independent set elected from $V_{i-1}$ by running a distributed MIS algorithm with respect to $r_i=2^i$; thereby $V_{\log\Gamma}\subseteq\cdots\subseteq V_1 \subseteq V_0=V$. One can see that the constructed $H$ holds the following properties: 
\begin{itemize}
  \item nodes in $V_i$ constitute an MIS of $V_{i-1}$ with respect to $r_i$; 
  \item each node $v\in V_{i-1}\setminus V_i$ has a parent node $u\in V_i$ and $d(v,u)\leq r_i$;
  \item $V_{\log\Gamma}$ contains only one node, i.e., the root.
\end{itemize}

For ease of explanation, nodes in $V_0\setminus V_{\log\Gamma}$ and $V_{\log\Gamma}$ are referred to as \emph{followers} and \emph{collector}, respectively. In this paper, we say that an event $E$ occurs with high probability (w.h.p.) if for any $c\geq1$, $E$ occurs with probability at least $1-1/N^c$. A summary of all critical notations and their semantic meanings is presented in Table~\ref{table:notation}.

\begin{table}[!htb]
\caption{Summary of Notations} \label{table:notation}
\centering
\begin{tabular}{c|c}
  \hline
  \textbf{Symbol} & \textbf{Description} \\
  \hline
    $BC_v$ & the blockchain locally stored at node $v$\\
    $B^{i}_v$ & the $i$th block in $BC_v$ \\
    $BC_v^{i+}$ & the partial blockchain of $BC_v$ before $B_v^i$\\
    $BC_v^{i-}$ & the partial blockchain of $BC_v$ after $B_v^i$\\
    $C_v$ & the set of child nodes of $v$\\
    $d_v$ & $v$'s data (to be aggregated)\\
    $id_v$ & $v$'s unique id\\
    $id_v^p$ & the unique id of $v$'s parent\\
    $M_v$ & the message queue maintained by $v$\\
    $m_v$ & a single message sent by $v$\\
    $p$ & the uniform transmission probability\\
    $P_i$ & the transmission power in the $i$th round\\
    $\hat{P}$ & the maximum transmission power\\
    $R_i$ & the $i$th round \\
    $tx_i^j$ & the $j$th transaction in the $i$th block\\
    $V_i$ & the set of nodes as an MIS of $V_{i-1}$\\
    $\sigma$ & a sufficiently large constant to determine $p$\\
    $\lambda'$ & an upper bound of the network density\\
    $\mu$ & a constant to determine the round length\\
    \hline
\end{tabular}
\end{table}

\section{The Protocol}
\label{sec:protocol}
In this section, we first present an overview on $\mathit{wChain}$ and demonstrate the involved utilities; then we detail the data aggregation and reaggregation algorithms, the two subroutines of our blockchain protocol; finally, we propose the three-phase fast fault-tolerant $\mathit{wChain}$ protocol.

\subsection{Technical Overview and Utilities}
\label{sec:sub:overview}
\subsubsection{Technical Overview}
The $\mathit{wChain}$ protocol is executed in disjoint and consecutive time intervals called \textit{epochs}, and at each epoch, no more than one block can be generated. Non-faulty nodes should append the new block to their local blockchain so that they can jointly maintain a consistent global view. Within each epoch, a spanner is first established as a communication backbone, and the collector of the spanner is appointed as the leader to take charge of $\mathit{wChain}$ for the entire epoch. However, if a crash failure occurs, a new spanner should be constructed in a reaggregation procedure for the same epoch. Under this circumstance, the collector might be changed, but the leader holds the line. Data can be aggregated from followers to the new collector who subsequently sends the aggregated data to the leader to complete the data aggregation process of the current epoch. $\mathit{wChain}$ proceeds by three phases, namely \texttt{PREPARE}, \texttt{COMMIT}, and \texttt{DECIDE}. In the  \texttt{PREPARE} phase, an incumbent leader aggregates view messages from all followers to learn about the latest view. If more than $f$ followers respond with the same view as that of the leader, the leader can aggregate transactions in the \texttt{COMMIT} phase. Otherwise, the leader sends a message to the followers to abandon the current epoch. The threshold $(f+1)$ (including $f$ followers plus one leader) is specifically designated to satisfy the \textit{quorum intersection property} which implies that among the $(f+1)$ nodes, at least one node is non-faulty and has the correct information. In the  \texttt{DECIDE} phase, the leader verifies the collected transactions, organizes them into a new block, and then sends the new block together with the view update information to the entire network, by which the nodes can update their local blockchain to obtain the latest view. 

\subsubsection{Utilities}
\begin{algorithm}[!htbp]
\label{alg:utility}
\DontPrintSemicolon
\caption{Utilities for node $v$}
\SetKwProg{Fn}{Function}{}{}
\Fn{MSG($data_v$)}{
    $m.data\leftarrow data_v$\;
    $m.timestamp\leftarrow time$\;
    $m.kindred\leftarrow \{id_v, id_v^p\}$\;
    $m.role\leftarrow \{role_v, level_v\}$\;
    \Return $m$\;
}
\Fn{$add(M_v, M_u)$}{
    \Return $M_v\leftarrow M_v\cup M_u$
}
\Fn{$packup(M_v)$}{
    \For {$m\in M_v$}{
        \If {$m.data$ is a valid transaction}{
            organize $m.data$ into $B_v$\;
        }
    }
    \Return $B_v$\;
}
\Fn{$append(BC_v, B_u)$}{
    \If {$B_u$ contains the hash of the $B_v^k$}{
        \Return $BC_v\leftarrow BC_v+B_u$\;
    }
    
}
\Fn{$extract(BC_v, M_v^{view})$}{
    $i\leftarrow$ the $(f+s)$th highest $seq$ searched in $M_v^{view}$\;
    \Return $BC_v^{i+}$\;
}
\Fn{$update(BC_v, BC_u^{i+}$)}{
    \For {$j=1$ to $|BC_u^{i+}|$}{
        \If {$B_u^j\notin BC_v$}{
            $append(BC_v, B_u^j)$\;
        }
    }
    \Return $BC_v$
}
\end{algorithm}

Before delving into details of the $\mathit{wChain}$ protocol, we explain its commonly used utilities. First of all, \textit{MSG($data_v$)} is used the most often in $\mathit{wChain}$ to generate a single message $m$ embodying a variable data field, which might be a string, a transaction, or a block. For example, the parameter $data_v$ can be $view_v$ (view information), or $tx_v$ (transaction); it also indicates that the input data is signed by $v$. Other than the data field, $m$ also includes the \texttt{timestamp}, \texttt{kindred}, and \texttt{role} fields for the verification purpose. The \texttt{kindred} is a tuple consisting of $v$'s identity $id_v$ and the identity $id_v^p$ of $v$'s parent. The \texttt{role} field clarifies the role (i.e., follower, collector, or leader) and the specific level information such as $V_0\setminus V_1$. A receiver recognizes $m$ as valid if $m$ includes correct identity and role information. Moreover, $M_v$ is a message queue to hold multiple messages. Duplicate messages in $M_v$, which can be identified based on the timestamp information,  should be discarded. 
A node $v$ can perform an $add(M_v, M_u)$ operation to append $M_u$ to its local $M_v$. For block formation, we provide $packup(M_v)$ to read all messages from $M_v$ and organize valid transactions into a block. Besides, $append(BC_v, B_u)$ can append a block $B_u$ from node $u$ to $BC_v$ only if $B_u$ contains the hash of the last block of $BC_v$. In addition, $extract(BC_v, M_v^{view})$ intends to extract a partial blockchain $BC_v^{i+}$ from $BC_v$ where $i$ is determined by the $(f+s)$th highest $seq$ searched in $M_v^{view}$ and $s$ is an adjustable constant to be determined later. For node recovery, the $update(BC_v, BC_u^{i+})$ function is designed to help a node $v$ update $BC_v$ by complementing missed blocks from the received $BC_u^{i+}$.

\subsection{Data Aggregation and Reaggregation}
\label{sec:sub:aggregation}
In this subsection, we present two critical algorithms to realize data aggregation and reaggregation. The objective of data aggregation is to rapidly collect data from all followers to the leader in $O(\log n\log\Gamma)$ w.h.p. However, messages might be lost due to crash failures. Therefore, we propose the data reaggregation subroutine to remedy such a situation and ensure that the leader can completely aggregate the data from all non-faulty nodes within an epoch. 
%

\subsubsection{Broadcast-Oriented Communications}
\label{sec:sub:broadcast}
When blockchains are to be implemented in a wireless network, unicast and multicast are generally not needed. Instead, we should exploit the broadcast nature of the wireless medium. Decomposing a typical consensus process one can see that there exist three major communication patterns, namely one-to-many, many-to-one, and many-to-many, that heavily employ broadcast and data aggregation, the two communication primitives. Broadcast can improve blockchain's efficiency since a one-to-many communication only costs a one-time broadcast to disseminate a node's message to all its peers within the communication range. For many-to-many, the communication complexity is $O(N^2)$ for a network size of $N$ if using unicast and multicast. With broadcast, the complexity can be reduced to $O(N)$.

However, to reach a consensus, medium contention and packet collision need to be carefully addressed in wireless networks. For this purpose we restrict the transmission probability to be $p$ and transmission power to be $P_i$, which are formally described in Section~\ref{sec:sub:aggregation} (also see line 6 in Algorithm.~\ref{alg:data:agg}).

\subsubsection{Data Aggregation}
\label{sec:sub:data:aggr}

\begin{algorithm}[!htbp]
\label{alg:data:agg}
\DontPrintSemicolon
\caption{$DataAggregation(data_v)$ Subroutine}
\SetKwInOut{Input}{input}
\SetKwProg{Fn}{Function}{}{}
\Fn{DataAggregation($data_v$)}{
    Initially, $m_v\leftarrow \textit{MSG($data_v$)}$, $M_v=\{m_v\}$\;
    $\triangleright$ \textcolor{blue}{In $R_i(i=1,2,\cdots,\log\Gamma)$:}\;
    \eIf{$v\in V_{i-1}\setminus V_i$}{
            \For {$\mu\cdot \log N$ slots}{
                send $M_v$ with probability $p=\frac{1}{\sigma\lambda'}$ and power $P_i=2\mathcal{N}\beta r_i^{\alpha}$\;
            }
    }{
        \If {$v\in V_i$}{
            \For {$\mu\cdot \log N$ slots}{
                listen on the channel\;
                \If {receive a valid $M_u$}{
                    $M_v\leftarrow add(M_v, M_u)$\;
                }
            }
        } 
    }
}

\end{algorithm}

Leveraging a spanner, we propose the \textit{DataAggregation($data_v$)} subroutine to aggregate the data level by level. As shown in Algorithm~\ref{alg:data:agg}, a node $v$ executing \textit{DataAggregation($data_v$)} takes $data_v$ as input, generates a message $m_v$ by \textit{MSG(data)}, and initializes $M_v=\{m_v\}$. Then data aggregation proceeds by $\log\Gamma$ rounds.  We denote by $R_i$ the $i$th round. Recall that for $i=1,2,\cdots,\log\Gamma$, a node $v\in V_{i-1}\setminus V_i$ has a parent node $u\in V_i$ and $d(v,u)\leq r_i$. As a consequence, $R_i$ is responsible for aggregating the data from $V_{i-1}\setminus V_i$ to $V_i$, and $R_{\log\Gamma}$ is the final round when the collector receives the data from $V_{\log\Gamma-1}\setminus V_{\log\Gamma}$. 

In a specific round $R_i$, the nodes in $V_{i-1}\setminus V_i$ constantly send $M_v$ for $\mu\cdot \log N$ slots with probability $p=\frac{1}{\sigma\lambda'}$, where $\lambda'=25$, $\mu$ and $\sigma$ are sufficiently large constants whose lower bounds are given in Section~\ref{sec:sub:efficiency}. Note that $\mu\cdot \log N$ is the optimal number of slots to ensure that without crash failures, the data from all child nodes can be completely aggregated to their parent w.h.p. The transmission power is set to be $P_i=2\mathcal{N}\beta r_i^{\alpha}$ so that the transmission range of each node is $2^{1/\alpha}r_i$, where $r_i=2^i$. This transmission range is slightly larger than $r_i$ such that for any child $v$, its parent is within the unit disk centered at $v$ with a radius $r_i$. This power control strategy improves child nodes' ability to resist interference outside the unit disk, thus contributing to the success of the transmissions. Moreover, a node $v\in V_i$ as a parent in $R_i$ listens on the channel for $\mu\cdot \log N$ slots to receive messages from its children. If $v$ receives a $M_u$ from a child $u$, it appends $M_u$ to $M_v$. The entire data aggregation process can be finished in $\mu\log N\log\Gamma$ slots, and the data from all non-faulty nodes can be aggregated w.h.p., which is proved in Section~\ref{sec:sub:efficiency}.

\subsubsection{Reaggragation}
\label{sec:sub:reag}
\begin{algorithm}[!htbp]
\label{alg:reaggregation}
\DontPrintSemicolon
\caption{$Reaggregation(data_v)$ Subroutine}
\SetKwProg{Fn}{Function}{}{}
$\triangleright$ \textcolor{BrickRed}{as a leader}\; 
\While {$true$}{
    $\triangleright$ \textcolor{blue}{slot one}\; 
    broadcast $M_{\ell}^{data}$\;
    $\triangleright$ \textcolor{blue}{slot two}\;
    listen on the channel\;
    $\triangleright$ \textcolor{blue}{slot three}\;
    \eIf {$\text{sense noise} > \mathcal{N}$ in slot two}{
        broadcast $m_{\ell}\leftarrow \textit{MSG}(reaggregation_{\ell})$\;
    }{
    	broadcast $m_{\ell}\leftarrow\textit{MSG}(stop_{\ell})$ and break\;
    }
    $\triangleright$ \textcolor{blue}{data reaggregation}\;
    wait for aggregated data from a collector\;
}
$\triangleright$ \textcolor{BrickRed}{as a follower}\; 
\While {$true$}{
	$\triangleright$ \textcolor{blue}{slot one}\; 
    listen on the channel\;
    $\triangleright$ \textcolor{blue}{slot two}\;
    \If {receive $M_{\ell}^{data}$ in slot one and $data_v\notin M_{\ell}^{data}$}{
        broadcast $m_v\leftarrow\textit{MSG}(miss_v)$\;
    }
    $\triangleright$ \textcolor{blue}{slot three}\;
    listen on the channel\;
    $\triangleright$ \textcolor{blue}{data reaggregation}\;
    \eIf {receive reaggregation message in slot three}{
        run $SpannerConstruction$\;
        run $DataAggregation(data_v)$\;
    }{
    	break\;
    }
}
\end{algorithm}

The reaggregation subroutine has two stages: a three-slot integrity check stage (lines 3-11, lines 16-22) and a data reaggregation stage (lines 12-13, lines 23-28). We define a broadcast operation (used in lines 4/9/11/20) as transmitting a message with $\hat{P}=2\mathcal{N}\beta r_{\log\Gamma}^{\alpha}$ so that a node listening on the channel can either receive a message from the sender or sense noise exceeding $\mathcal{N}$. Concretely, the integrity check stage intends to examine whether the leader loses any message from non-faulty nodes using physical carrier sensing. In slot one, the leader $l$ broadcasts its current $M_{\ell}^{data}$ to the entire network, where $M_{\ell}^{data}$ is the message queue embodying the messages whose type is $data$ (e.g., $M_{\ell}^{view}$ is the message queue of the view messages). Upon receiving $M_{\ell}^{data}$, each node $v$ examines if its $data_v$ is included in $M_{\ell}^{data}$. If not, $v$ broadcasts $m_v\leftarrow\textit{MSG}(miss_v)$ so that in slot two the leader can get the notice saying that some messages are missed by sensing noise greater than $\mathcal{N}$, and broadcast $m_{\ell}\leftarrow\textit{MSG}(reaggregation_{\ell})$ in slot three to start the second stage. In the data reaggregation stage, all nodes except the leader run \textit{SpannerConstrcution} to reconstruct a spanner free from faulty nodes. The \textit{SpannerConstrcution} procedure is the same as the one we illustrate in Section~\ref{sec:model}. The new spanner does not include the leader and it elects a new collector who is responsible for sending the aggregated data to the leader. Afterwards, the nodes whose messages are missed in $M_{\ell}^{data}$ run $DataAggregation(data_v)$. Only when no messages from non-faulty nodes are missed can the leader broadcast $m_{\ell}\leftarrow\textit{MSG}(stop_{\ell})$ to end the reaggregation process. 

\subsection{Fast Fault-Tolerant Blockchain Protocol}
\label{sec:sub:agree}

\begin{algorithm}[!htbp]
\label{alg:agreement}
\DontPrintSemicolon
\caption{Fast Fault-Tolerant Blockchain Protocol}
$\triangleright$ \textcolor{blue}{ \texttt{PREPARE}}\;
$\triangleright$ \textcolor{BrickRed}{as a leader}\; 
broadcast $m_{\ell}\leftarrow\textit{MSG}(view_{\ell})$\;
listen on the channel for $\mu\log N \log\Gamma$ slots\;
execute $Reaggregation(view_{\ell})$\;
$\triangleright$ \textcolor{BrickRed}{as a follower}\; 
\eIf {receive $view_u$ from a leader}{
    run $DataAggregation(view_v)$\;
}{
    abandon the current epoch\;
}
execute $Reaggregation(view_v)$\;
$\triangleright$ \textcolor{blue}{\texttt{COMMIT}}\;
$\triangleright$ \textcolor{BrickRed}{as a leader}\; 
\If {$|\{m\in M_{\ell}^{view}|m.data = view_{\ell}\}|\geq f+1$}{
    broadcast $m_{\ell}\leftarrow\textit{MSG}(correct_{\ell})$\;
    listen on the channel for $\mu\log N \log\Gamma$ slots\;
}
execute $Reaggregation(tx_v)$\;
$\triangleright$ \textcolor{BrickRed}{as a follower}\; 
\eIf {receive $correct_{\ell}$ from a leader}{
    run $DataAggregation(tx_v)$\;
}{ 
    abandon the current epoch\;
}
execute $Reaggregation(tx_v)$\;
$\triangleright$ \textcolor{blue}{ \texttt{DECIDE}}\;
$\triangleright$ \textcolor{BrickRed}{as a leader}\; 
$B_{\ell}\leftarrow \textit{packup}(M_{\ell}^{tx})$, and $BC_{\ell}\leftarrow append(BC_{\ell}, B_{\ell})$\;
broadcast $BC_{\ell}^{i+}\leftarrow extract(BC_{\ell}, M_{\ell}^{view})$\;
$\triangleright$ \textcolor{BrickRed}{as a follower}\; 
\eIf {receive $BC_{\ell}^{i+}$ from the leader}{
    update($BC_v$, $BC_{\ell}^{i+}$)\;
}{
    abandon the current epoch\;
}
\end{algorithm}

$\mathit{wChain}$ is a three-phase protocol that can achieve consensus on the sequence of blocks and handle failures caused by wireless node crashes. Concretely, $\mathit{wChain}$ leverages broadcast communications, data aggregation and reaggregation, which are all specifically designed for wireless networks. At each epoch, $\mathit{wChain}$ proceeds by three phases, namely \texttt{PREPARE}, \texttt{COMMIT},  and \texttt{DECIDE}. In the following, we depict each phase to demonstrate how fast fault-tolerance can be achieved in $\mathit{wChain}$.

\textbf{Prepare.} The  \texttt{PREPARE} phase intends to help a leader obtain a global view. Recall that when a spanner is constructed for the first time, $\mathit{wChain}$ appoints the collector as the leader to take charge of the current epoch. As a leader, $\ell$ broadcasts $m_{\ell}\leftarrow\textit{MSG}(view_{\ell})$ to the entire network. Each follower $v$ runs $DataAggregation(view_v)$ upon receiving the view information from ${\ell}$ in the previous slot. Otherwise, the follower abandons the current epoch. Note that all nodes should execute the reaggregation subroutine to ensure that the data from all non-faulty nodes are completely aggregated. 

\textbf{Commit.} Denote by $M_{\ell}^{view}$ the message queue of $\ell$ embodying its view information. The requirement of $|\{m\in M_{\ell}^{view}|m.data = view_{\ell}\}|\geq f+1$ means that the leader should successfully receive no less than $f$ view messages that have identical views as itself. That is also to say, at least $f+1$ nodes ($f$ followers and one leader) have an identical view. If such a requirement is satisfied, the leader can broadcast a correct message and listen on the channel to receive transactions while the followers receiving a correct signal start transaction aggregation. The reaggregation subroutine is still executed to ensure the full aggregation of transactions.

\textbf{Decide.} When the leader ${\ell}$ receives all transactions from non-faulty nodes, it packs up the transactions to form a block $B_{\ell}$, and appends the $B_{\ell}$ to its local $BC_{\ell}$. Then the leader ${\ell}$ executes $extract(BC_{\ell}, M_{\ell}^{view})$ to formulate a $BC_{\ell}^{i+}$, which is used to help recover at least $s$ nodes that have crashed in previous epochs and need to update their blockchains to become non-faulty. Each non-faulty follower $v$ updates $BC_v$ by running $update(BC_v, BC_{\ell}^{i+})$.

\section{Protocol Analysis}
\label{sec:analysis}
In this section, we analyze the protocol in terms of efficiency of data aggregation and reaggregation, persistence, and liveness.

\subsection{Efficiency of Data Aggregation and Reaggregation}
\label{sec:sub:efficiency}
\begin{theorem}
\label{thm:time}
  The runtime of the data aggregation subroutine is upper bounded by $O(\log N \log\Gamma)$ slots w.h.p., and the runtime of the reaggregation subroutine is upper bounded by $O(f\log N \log\Gamma)$ slots w.h.p.
\end{theorem}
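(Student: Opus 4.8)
The plan is to handle the two bounds separately, because the data-aggregation bound is really a correctness claim disguised as a runtime bound, while the reaggregation bound is a counting argument layered on top of it. For \textit{DataAggregation}, notice first that the slot count is deterministic: the subroutine runs the rounds $R_1,\dots,R_{\log\Gamma}$, each of exactly $\mu\log N$ slots, for a total of $\mu\log N\log\Gamma=O(\log N\log\Gamma)$ slots. Hence the substance of the ``w.h.p.'' qualifier is that within these slots the data of every non-faulty node actually reaches its parent. I would reduce this to a per-round local-broadcast guarantee: in each round $R_i$, every child $v\in V_{i-1}\setminus V_i$ succeeds at least once, during its $\mu\log N$ slots, in delivering $M_v$ to its parent $u\in V_i$.

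For the per-round analysis I would fix $R_i$, a parent $u\in V_i$ (which only listens during $R_i$), and a child $v\in V_{i-1}\setminus V_i$ with $d(v,u)\le r_i$. The power choice $P_i=2\mathcal{N}\beta r_i^{\alpha}$ makes the signal received at $u$ from $v$ at least $P_i r_i^{-\alpha}=2\mathcal{N}\beta$, so $SINR(v,u)\ge\beta$ holds in any slot in which $v$ transmits and the aggregate interference $I$ at $u$ satisfies $I\le\mathcal{N}$. To control $I$ I would use that every interferer lies in $V_{i-1}$ and is therefore separated from the others by more than $r_{i-1}=r_i/2$; partitioning the plane into concentric annuli around $u$ and counting how many separated nodes each annulus can hold gives $\mathbb{E}[I]=O(p\,\lambda'\,\mathcal{N}\beta)$, a sum over annuli that converges precisely because $\alpha>2$. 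Since $p=\tfrac{1}{\sigma\lambda'}$, this equals $O(\mathcal{N}\beta/\sigma)$, so taking $\sigma$ large enough forces $\mathbb{E}[I]<\mathcal{N}/2$, whence Markov gives $\Pr[I\le\mathcal{N}]\ge\tfrac12$. As $v$'s transmission is independent of the interferers, the per-slot success probability for $v$ is at least $p\cdot\tfrac12=\Omega(1)$; over $\mu\log N$ slots the chance that $v$ never succeeds is at most $(1-\Omega(1))^{\mu\log N}\le N^{-(c+1)}$ for a sufficiently large constant $\mu$, and a union bound over the at most $N$ nodes delivers the w.h.p. claim. This simultaneously pins down the promised lower bounds on $\sigma$ and $\mu$.

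For the reaggregation bound, each pass of the loop consists of a constant-length (three-slot) integrity-check stage followed, when a message is missing, by one spanner reconstruction and one call to \textit{DataAggregation}, both costing $O(\log N\log\Gamma)$ slots by the first part and by the spanner-construction bound of Section~\ref{sec:model}. It therefore suffices to bound the number of passes. A pass triggers reaggregation only when a non-faulty node's message is absent from $M_{\ell}^{data}$, which by the first part happens w.h.p. only because a crash occurred during that pass; once a node has crashed it becomes faulty and its message is no longer required, so a single crash can force at most one additional pass. Since at most $f$ nodes can crash during the epoch, at most $O(f)$ passes occur before the leader senses no noise and broadcasts $stop_{\ell}$, giving a total of $O(f\log N\log\Gamma)$ slots, where the w.h.p. qualifier absorbs a union bound over the $O(f)$ aggregation calls.

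I expect the main obstacle to be the interference bound in the per-round SINR analysis: correctly summing the expected interference over the geometrically spaced independent set $V_{i-1}$ and choosing $\sigma$ and $\mu$ so that the per-slot success probability stays bounded below by a constant while the w.h.p. amplification holds simultaneously across every round and every node. A secondary subtlety is the ``one crash, one extra pass'' accounting, which must be argued carefully so that faults are not double-counted when bounding the number of reaggregation passes by $O(f)$.
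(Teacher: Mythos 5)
Your proposal is correct and follows the same overall skeleton as the paper: a per-round ``every child delivers to its parent with constant per-slot probability'' lemma (the paper's Lemma~\ref{lemma:cluster:com}), amplification over the $\mu\log N$ slots with a union bound over nodes and rounds, and for reaggregation the same ``at most $f$ crashes, hence at most $O(f)$ repetitions of spanner construction plus \textit{DataAggregation}'' count. The one substantive divergence is inside the per-slot lemma, and it is a point in your favor. The paper defines the success event as ``the child $u$ is the \emph{sole} transmitter inside the parent's disk'' (probability $\Omega(1)$ via the packing bound $\lambda'\leq 25$ and $p=\frac{1}{\sigma\lambda'}$), and then plugs the \emph{expected} out-of-disk interference $I_{out}\leq \mathcal{N}/2$ directly into the SINR formula — strictly speaking this last step needs a Markov or concentration argument that the paper omits. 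You instead bound the total expected interference at the parent (near field controlled by the MIS separation $d(w,u)>r_i/2$, since the parent itself lies in $V_{i-1}$; far field by the annulus sum converging for $\alpha>2$), apply Markov once to get $\Pr[I\leq\mathcal{N}]\geq\tfrac12$, and intersect with the independent event that the child transmits. This is a single clean probabilistic step that both subsumes the paper's sole-transmitter conditioning and closes its expectation-into-SINR hand-wave, at the cost of needing $\sigma$ large enough to absorb the extra near-field constant of order $2^{\alpha}$. Your explicit ``one crash, one extra pass'' accounting and the union bound over the $O(f)$ aggregation calls in the reaggregation part are also slightly more careful than the paper's one-sentence version, but mathematically equivalent to it.
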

\begin{proof}

We first present Lemma~\ref{lemma:cluster:com}, which focuses on one slot in a given round $R_i$. 
\begin{lemma}
\label{lemma:cluster:com}
For a given slot in round $R_i$, if $v\in V_i$ is a parent node of some node in $V_{i-1}$, for any node $u\in V_{i-1} \cap N_{r_i}(v)$, if $u$ transmits, $v$ can receive the message with a constant probability.
\end{lemma}

\begin{proof}
For a given slot in round $R_i$, we denote the aggregated transmission probability of $V_{i-1} \cap N_{r_i}(v)$ by $P_i(v)=\sum_{w\in V_{i-1} \cap N_{r_i}(v)} p_w$. Let's first prove that $P_i(v)$ can be bounded by $\frac{1}{\sigma}$, where $\sigma$ is a sufficiently large constant. Due to the property of the maximum independent set, the disks of radius $r_i/2$ centered at any node $w\in V_{i-1} \cap N_{r_i}(v)$ are disjoint. We define density $\lambda$ as the number of nodes in $V_{i-1} \cap N_{r_i}(v)$, then the upper bound of density is
\begin{equation}
\begin{aligned}
    \lambda\leq \frac{\pi[r_1+r_0/2]^2}{\pi(r_0/2)^2}= 25,
\end{aligned}
\end{equation}
where $\lambda=25$ when $i=1$. Since for each $w\in V_{i-1} \cap N_{r_i}(v)$, $p_w=\frac{1}{\sigma\lambda'}$, we have $P_i(v)=\lambda\cdot \frac{1}{\sigma\lambda'}\leq \frac{1}{\sigma}$, where $\lambda'=25$. 

Then we partition the whole space outside $D_{r_i}(v)$ into rings $R_j$ for $j\geq 1$, where $R_j$ is the ring with a distance in the range $[jr_i, (j+1)r_i]$ from $v$. Denote by $S_j$ the set of nodes in $V_{i-1}$ that also fall into $R_j$. Considering the property of maximum independent set, one can see that the disks of radius $r_i/2$ centered at the nodes in $R_j$ are disjoint. Then we have
\begin{equation}
\begin{aligned}
    |S_j|&\leq \frac{\pi[(j+1)r_i+r_i/2]^2-\pi[jr_i-r_i/2]^2}{\pi(r_i/2)^2}
       \leq 24j.
\end{aligned}
\end{equation}

Let $I(v,w)$ be the interference at $v$ caused by $w$. Denote by $I_{out}$ the interference caused by the nodes outside $D_{r_i}(v)$. Then one can calculate $I_{out}$ as follows:
\begin{equation}
\begin{aligned}
    I_{out} &= \sum_{j=1}^{\infty} \sum_{w\in S_j} I(v,w)
            = \sum_{j=1}^{\infty} \sum_{w\in S_j} \frac{P_i}{d(v,w)^{\alpha}} \cdot p\\
           &\leq \sum_{j=1}^{\infty} |S_j|\frac{1}{\sigma}\cdot \frac{2\mathcal{N}\beta r_i^{\alpha}}{(jr_i)^{\alpha}}
           \leq \frac{24\beta(\alpha-1)}{\sigma(\alpha-2)}\cdot N
           \leq N/2,
\end{aligned}
\end{equation}
where the last inequality holds when $\sigma>\frac{48\beta(\alpha-1)}{(\alpha-2)}$. Considering the case when $u$ is the only node that transmits in the current slot, since $u\in V_{i-1} \cap N_{r_i}(v)$, $d(v,u)\leq r_i$, we have
\begin{equation}
\begin{aligned}
    SINR(v,u)= 
    \frac{
        \frac{P_i}{d(v,u)^{\alpha}}
    }{
        N+I_{out}
    }
    \geq 
    \frac{
        \frac{2\mathcal{N}\beta r_i^{\alpha}}{r_i^{\alpha}}
    }{
        N+N/2
    }
    \geq \beta,
\end{aligned}
\end{equation}
which indicates that if $u$ is the only node who transmits, $v$ can receive the message. Then we bound the probability that $u$ is the only transmitting node. Since $P_i(v)\leq \frac{1}{\sigma}$, the probability that only $u$ transmits at each slot is
\begin{equation}
\begin{aligned}
p_u \prod_{w\in V_i \cap N_{r_i}(v)\backslash u}(1-p_w) &\geq p_u\prod_{w\in V_i \cap N_{r_i}(v)}(1-p_w)\\
&\geq p_u\prod_{w\in V_i \cap N_{r_i}(v)} e^{\frac{-p_w}{1-p}}\\
&= p e^{\frac{-P_i(v)}{1-p}}\\
&\geq (\sigma\lambda')^{-1} e^{\frac{-\lambda'}{\sigma\lambda'-1}}\\
&\in \Omega(1).
\end{aligned}
\end{equation}

This implies that with a constant probability, $v$ can receive $u$'s message. 
\end{proof}


In Algorithm~\ref{alg:data:agg}, each round consists of a fixed number of $\mu \cdot \log N$ slots. At each slot, a child $u\in V_{i-1}\setminus V_i$ transmits constantly with probability $p=\frac{1}{\sigma\lambda'}$. Lemma~\ref{lemma:cluster:com} indicates that at each slot, $u$ can succeed in sending a message to its parent $v$ with a constant probability denoted by $\hat{p}$. Thus, by applying the Chernoff bound (see Lemma.~\ref{lemma:chernoff} in Sec.~\ref{sub:chernoff}), the probability that $u$ succeeds in sending a message to its parent after $\mu\cdot \log N$ slots is $1-(1-\hat{p})^{\mu\log N}\geq 1-e^{-\hat{p}\mu\log N}\geq 1-N^{-2}$ if $\mu\geq 2/\hat{p}$. Since the density of the active nodes is bounded by $\lambda'=25$, the probability that all children succeed is $(1-N^2)^{\lambda'}$. 

Next, assume that the nodes have synchronized clocks. Then at each round $R_i$, the nodes in different independent sets can send messages to their parents at the same time. The probability that the data has been aggregated to all parent nodes in $R_i$ is at least $(1-N^2)^{\lambda'V_{i+1}}\geq(1-N^{-1})$ since $\lambda'V_{i+1}<N$. Thus, the one-round data aggregation succeeds in $O(\log N)$ w.h.p. Considering the $(\log\Gamma)$-round aggregation process, one can immediately derive that data aggregation succeeds in $O(\log N\log\Gamma)$ w.h.p. 

Unlike a normal data aggregation, Algorithm~\ref{alg:reaggregation} terminates when the messages from all non-faulty nodes are received by the collector without any loss. A message can be lost when crash failure happens. The number of faulty nodes is bounded by $f$, thus during the data reaggregation process, there are at most $f$ times of the execution of the spanner construction and data aggregation, which gives the upper bound of the runtime as $O(f\log N \log\Gamma)$.
\end{proof}

\subsection{Persistence and Liveness}

In this subsection, we demonstrate how our protocol ensures persistence and liveness properties whose definitions are adapted from the rigorous ones proposed by Garay \textit{et al.} \cite{garay2015bitcoin}.

\begin{theorem}{\textbf{Persistence.}}
\label{thm:persistence}
   If a non-faulty node $v$ proclaims a transaction $tx_v$ in the position $tx^j_i$, other nodes, if queried, should report the same result. 
\end{theorem}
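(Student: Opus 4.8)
The plan is to reduce persistence to three facts: (i) at most one block is committed at each sequence number across the entire execution; (ii) the hash-chain structure then pins down not only $B^i$ but its whole prefix, so any node that reports height $i$ must report the same block; and (iii) the recovery mechanism can only reinstall blocks that already lie in this unique committed prefix. I would prove (i) from the \emph{quorum intersection property} exercised by the \texttt{COMMIT} guard, (ii) by induction on block height using the $append$ utility, and (iii) from the guarded dissemination in \texttt{DECIDE}.

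First I would characterize when a block is committed. A block $B^i$ is committed only when the (unique) leader $\ell$ of the epoch passes the \texttt{COMMIT} guard $|\{m\in M_\ell^{view} \mid m.data = view_\ell\}|\ge f+1$, i.e. the leader together with $f$ followers all report the same $view=\{seq,hash\}$ as $\ell$. Since at most $f$ nodes are faulty, the pigeonhole principle (the quorum intersection property invoked in the protocol) guarantees that at least one of these $f+1$ nodes is non-faulty and therefore holds the latest view. Consequently $\ell$'s committing view must equal the current committed head: a leader whose $view$ pointed to a stale head could never collect $f+1$ agreeing reports, because the guaranteed non-faulty responder carries the true head rather than the stale one and would fail the match. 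Thus $B^i$ is forced to extend the unique current head $B^{i-1}$, and since epochs are disjoint and consecutive with a single leader producing at most one block each, at most one block is committed per sequence number. This is the main step, and the subtlety I would watch is reaggregation: a mid-epoch crash triggers \textit{Reaggregation}, which rebuilds the spanner and may replace the collector, yet the overview fixes the leader $\ell$ for the whole epoch; I must verify that the \texttt{COMMIT} guard is still evaluated exactly once by that single $\ell$ against a view-queue $M_\ell^{view}$ that only grows through honest reaggregation, so the $f+1$ certification and hence the ``build on the correct head'' conclusion survive the crashes.

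Next I would induct on the block height to pin down the entire prefix. The genesis block is shared, giving the base case. Assuming the committed prefix up to $B^{i-1}$ is unique, the certified $view$ fixes its $hash$ component to the hash of $B^{i-1}$; since $append(BC_v, B^i)$ links $B^i$ only when $B^i$ carries the hash of the current head, every non-faulty node that accepts $B^i$ attaches it to the same $B^{i-1}$, extending the unique prefix by the unique block $B^i$. Within $B^i$, the ordering of transactions is produced deterministically by $packup(M_\ell^{tx})$ from the aggregated queue, so the position $tx^j_i$ of $tx_v$ is itself determined once $B^i$ is fixed. Hence any non-faulty node holding height $i$ reports $tx_v$ at exactly $tx^j_i$.

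Finally I would cover nodes queried after recovering from an earlier crash. In \texttt{DECIDE} the leader disseminates $BC_\ell^{i+}=extract(BC_\ell, M_\ell^{view})$, the confirmed prefix anchored at the $(f+s)$th highest $seq$, and each recovering node applies $update(BC_v, BC_\ell^{i+})$, which calls $append$ block-by-block and therefore installs a block only when its hash link matches. Because every disseminated block lies in the already-established unique committed prefix, recovery can never graft a conflicting block; a lagging node either reproduces the canonical $B^i$ or has simply not yet reached height $i$ and reports no block there, so in no case does it report a conflicting result. Combining the single-block-per-height claim, the hash-chain induction, and this recovery argument yields persistence.
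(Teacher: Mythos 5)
Your proposal is correct, but it takes a genuinely different route from the paper's proof. The paper argues by contradiction on a single conflicting position $tx_i^j$ with a two-case split: C1 (both appends in the same epoch) is dismissed because a leader emits at most one block per epoch, and C2 (appends in different epochs) is settled by a counting argument in which the $f+1$ nodes sharing the later leader's view plus the $f+s$ nodes that recorded the original $tx_i^j$ (the threshold baked into $extract$) would exceed $N=2f+1$, forcing $tx_i^j=tx_v=tx_u$. You instead establish a global canonical-chain invariant: quorum intersection at the \texttt{COMMIT} guard forces every committed block to extend the unique latest head, so at most one block is committed per sequence number; induction over the hash chain via $append$ then pins the whole prefix and hence the position $tx_i^j$; and $extract$/$update$ can only replay blocks of that canonical prefix, so recovery cannot graft a conflict. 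Your decomposition is more modular and proves something slightly stronger (agreement on entire prefixes, not just one position), and it makes visible that the recovery parameter $s$ is irrelevant to uniqueness --- it only governs how far back laggards are repaired --- whereas the paper's C2 uses $s$ in its count; what the paper's route buys is brevity and a direct tie to the quantities the protocol explicitly checks. Two remarks: your step (i) silently leans on the paper's definition that any node lacking the latest view is itself counted as faulty, so the guaranteed non-faulty member of any $(f+1)$-quorum necessarily carries the true head --- this is exactly the lever behind the paper's ``quorum intersection property,'' so you are on equal footing there; and both your proof and the paper's gloss over the corner case of a leader crashing between its local $append$ and the \texttt{DECIDE} broadcast, which is tolerable only because such a node is thereafter classified faulty and thus excluded from the theorem's claim (the paper's conclusion even allows nodes to ``report error messages''). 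Your flagged concern about reaggregation --- that the \texttt{COMMIT} guard is evaluated once by the fixed leader even if the collector changes --- is consistent with the paper's overview statement that the collector may change but ``the leader holds the line.''
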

\begin{proof}
To prove the persistence property, we need to show that for any two blockchains $BC_v$ and $BC_u$ of nodes $v$ and $u$, respectively, one cannot find two different transactions $tx_v\in BC_v$ and $tx_u\in BC_u$ that are in the same position $tx_i^j$. To prove by contradiction, we assume that such $tx_v$ and $tx_u$ exist, and there are two cases when the assumption can hold. 

C1: $tx_v$ and $tx_u$ are respectively appended to blockchains $BC_v$ and $BC_u$ at the same epoch. This indicates that a leader broadcasts two different blocks in the same epoch, which is not permissible in $\mathit{wChain}$, thus contradicting our assumption. 

C2: $tx_v$ and $tx_u$ are appended to their corresponding blockchains $BC_v$ and $BC_u$ in two different epochs $e_m$ and $e_n$. Let $tx_i^j$ also denote the transaction generated for the first time in position $tx_i^j$ and appended to the blockchains of at least $f+1$ nodes in $e_i$. Since a leader cannot broadcast two different blocks in the same epoch, the nodes who append $tx_i^j$ to their local blockchain in $e_i$ should have an identical view of $tx_i^j$. Without loss of generality, assume $i<m<n$. Using contradiction, we assume $tx_v\neq tx_i^j$. Since $i<m$, $v$ should crash before $e_m$ and recover in $e_m$ so that $tx_v$ is appended to $BC_v$ when $v$ updates its blockchain by applying $update()$. The leader who sends the update information in $e_m$ has an identical view with at least $f$ nodes, which means that at least $f+1$ nodes have the same view on $tx_i^j$ in $e_m$. Since there are also at least $f+s$ nodes who agree on $tx_i^j$ in $e_i$ and the network size $N=2f+1$, we have a contradiction saying that $N>2f+1+s>N$. That is, one can only have $tx_i^j=tx_v$. By applying the same proof, we can obtain $tx_i^j=tx_u$. Hence $tx_i^j=tx_v=tx_u$, which contradicts the assumption that $tx_v$ and $tx_u$ are different. 

In a nutshell, all the nodes queried for a transaction in a specific position should report the same result or report error messages. 
\end{proof}

\begin{theorem}{\textbf{Liveness.}}
\label{thm:liveness}
  If a non-faulty node generates a transaction and contends to send it, $\mathit{wChain}$ can add it to the blockchains within $T$ slots w.h.p., where the upper bound of $T$ is $O(\log N \log\Gamma)$ when crash failures happen in a low frequency, and the worst-case upper bound of $T$ is $O(f\log N \log\Gamma)$.
\end{theorem}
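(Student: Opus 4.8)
The plan is to trace the life cycle of a transaction $tx_v$ generated by a non-faulty node $v$ through the three phases of $\mathit{wChain}$ and to bound the elapsed time by invoking Theorem~\ref{thm:time} for the aggregation and reaggregation subroutines. First I would establish that progress is possible at all: since $N=2f+1$ and at most $f$ nodes are faulty, at least $f+1$ nodes are non-faulty. In an epoch whose leader $\ell$ is non-faulty, these $f+1$ non-faulty nodes share a consistent latest view, which is precisely what the view aggregation of the \texttt{PREPARE} phase together with the persistence guarantee of Theorem~\ref{thm:persistence} secures. Hence $\ell$ collects at least $f+1$ matching view messages, the \texttt{COMMIT} threshold $|\{m\in M_\ell^{view} : m.data = view_\ell\}|\ge f+1$ is crossed, the epoch is not abandoned, and $v$ proceeds to run $DataAggregation(tx_v)$.

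Second, I would argue that $tx_v$ actually reaches $\ell$. Because $v$ contends on the channel with probability $p$, Lemma~\ref{lemma:cluster:com} gives it a constant per-slot success probability, and by Theorem~\ref{thm:time} the $DataAggregation$ subroutine delivers the data of all non-faulty nodes to the collector (hence to the leader) within $O(\log N\log\Gamma)$ slots w.h.p. Should any message be dropped because of a crash during aggregation, the $Reaggregation$ subroutine detects the loss through the physical-carrier-sensing integrity check (slots one–three of Algorithm~\ref{alg:reaggregation}) and triggers a fresh spanner construction followed by a repeated $DataAggregation$, so that $tx_v$, emanating from the non-faulty node $v$, is eventually received without loss. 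Once $\ell$ holds $tx_v$, the \texttt{DECIDE} phase packs it into $B_\ell$ via $packup(M_\ell^{tx})$, appends $B_\ell$ to $BC_\ell$, and broadcasts $BC_\ell^{i+}$; every non-faulty follower then calls $update(BC_v, BC_\ell^{i+})$ and thereby installs the block containing $tx_v$.

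Third I would total the running time. In the favourable regime where crash failures are rare, no (or only $O(1)$) reaggregation rounds are needed, so each of the three phases is dominated by a single $DataAggregation$ costing $O(\log N\log\Gamma)$ slots plus a constant number of broadcasts; summing the three phases keeps $T=O(\log N\log\Gamma)$. In the adversarial regime, each phase may incur up to $f$ reaggregation rounds, and by the second part of Theorem~\ref{thm:time} a full reaggregation costs $O(f\log N\log\Gamma)$ slots, which dominates the per-phase cost and, summed over the constant number of phases, yields the worst-case bound $T=O(f\log N\log\Gamma)$.

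The hard part will be the progress argument of the first step rather than the time accounting, which follows mechanically from Theorem~\ref{thm:time}. Specifically, I must rule out perpetual epoch abandonment by showing that whenever the current epoch's fixed leader is non-faulty, the non-faulty quorum of size $\ge f+1$ genuinely shares the leader's view, so that the \texttt{COMMIT} threshold is crossed and inclusion of $tx_v$ is not indefinitely postponed. This is where the interaction with persistence (Theorem~\ref{thm:persistence}) and with the quorum-intersection property of the $(f+1)$ threshold must be made precise, since a careless bound would permit the $f$ stale or crashed nodes to block the leader from reaching the threshold under a pathological schedule.
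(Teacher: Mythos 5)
Your overall route is the same as the paper's: trace the transaction through the three phases, invoke Theorem~\ref{thm:time} for the aggregation and reaggregation costs, and split into a low-crash-frequency case giving $O(\log N\log\Gamma)$ and a worst case charged against the global budget of $f$ crashes. One remark before the main issue: the ``hard part'' you isolate --- ruling out abandonment when the leader is non-faulty --- is easier than you fear and needs neither Theorem~\ref{thm:persistence} nor a schedule argument. In this model a node is \emph{defined} as faulty if it crashed or lacks the latest view, so every non-faulty node shares a non-faulty leader's view by definition; since $N=2f+1$ and at most $f$ nodes are faulty, the leader plus at least $f$ non-faulty followers meet the $(f+1)$ threshold immediately, and crash-faulty nodes can only stay silent or report stale views, neither of which blocks the count in a crash (non-Byzantine) model.

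The genuine gap is in your worst-case accounting: you only charge for reaggregation rounds \emph{within} the three phases of a single epoch (``each phase may incur up to $f$ reaggregation rounds \ldots summed over the constant number of phases''), which covers follower crashes but not leader crashes. A crashed leader is not repaired by $Reaggregation$ --- the leader itself drives that subroutine --- so instead the epoch is abandoned and the transaction must wait for a fresh epoch with a new spanner, collector, and leader; your implicit premise that a constant number of phases of one epoch suffices then fails. The paper closes exactly this case: a leader crash wastes at most one epoch, i.e., $O(\log N\log\Gamma)$ slots (worst when it crashes in the \texttt{DECIDE} phase), and since each crash --- whether of a leader or a follower --- consumes one unit of the budget of $f$ failures, the total over all abandoned epochs plus all reaggregations remains $O(f\log N\log\Gamma)$. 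Your argument is repaired by replacing the per-phase bookkeeping with this per-crash bookkeeping; as written, it does not establish the worst-case bound when some of the $f$ crashes hit leaders.
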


\begin{proof}
In a specific epoch, the best case for $\mathit{wChain}$ occurs when no crash failures happen for all nodes throughout the epoch so that the leader executing the $Reaggregation(data_{\ell})$ subroutine can broadcast $m_{\ell}\leftarrow\textit{MSG}(stop_{\ell})$ without the need of waiting for the aggregated data from a new collector. By Theorem~\ref{thm:time}, the view messages and the transactions can be fully aggregated within $O(\log N \log\Gamma)$ w.h.p. Besides, the  \texttt{DECIDE} phase takes $O(1)$ slots. Hence the upper bound of $T$ is $O(\log N \log\Gamma)$. In a normal case, failures happen in a low frequency so that data aggregation can be executed in $O(1)$ time during the data reaggregation process, and the \texttt{PREPARE} and  \texttt{COMMIT} phases take $O(\log N\log\Gamma)$ slots in total. Therefore, the upper bound of $T$ is still $O(\log N \log\Gamma)$. 

From the worst-case perspective, we assume that before epoch $e_i$, nodes are all non-faulty, and $f$ nodes crash during $e_i$. If a leader crashes, the worst-case runtime of the current epoch is still $O(\log N\log\Gamma)$ when it crashes in the  \texttt{DECIDE} phase. If a follower $v$ crashes, some data from $v$'s children cannot be collected to the leader such that the leader must execute $SpannerConstruction$ and $DataAggregation(data_v)$, which take $O(\log N\log\Gamma)$ slots. Thus the extra runtime brought by a one-time crash failure is $O(\log N\log\Gamma)$. If $f$ nodes crash, the total time spent on waiting for a successful  \texttt{DECIDE} phase is bounded by $O(f\log N\log\Gamma)$. This gives the normal and worst-case upper bound of $T$ as $O(\log N \log\Gamma)$ and $O(f\log N \log\Gamma)$, respectively.
\end{proof}

\section{Simulation Results}
\label{sec:simulation}

\begin{figure*}[!htb]
\centering
\subfigure[$150\times150$ plane]{
    \label{fig:el_n_uge}
    \centering
    \includegraphics[width=2.7in]{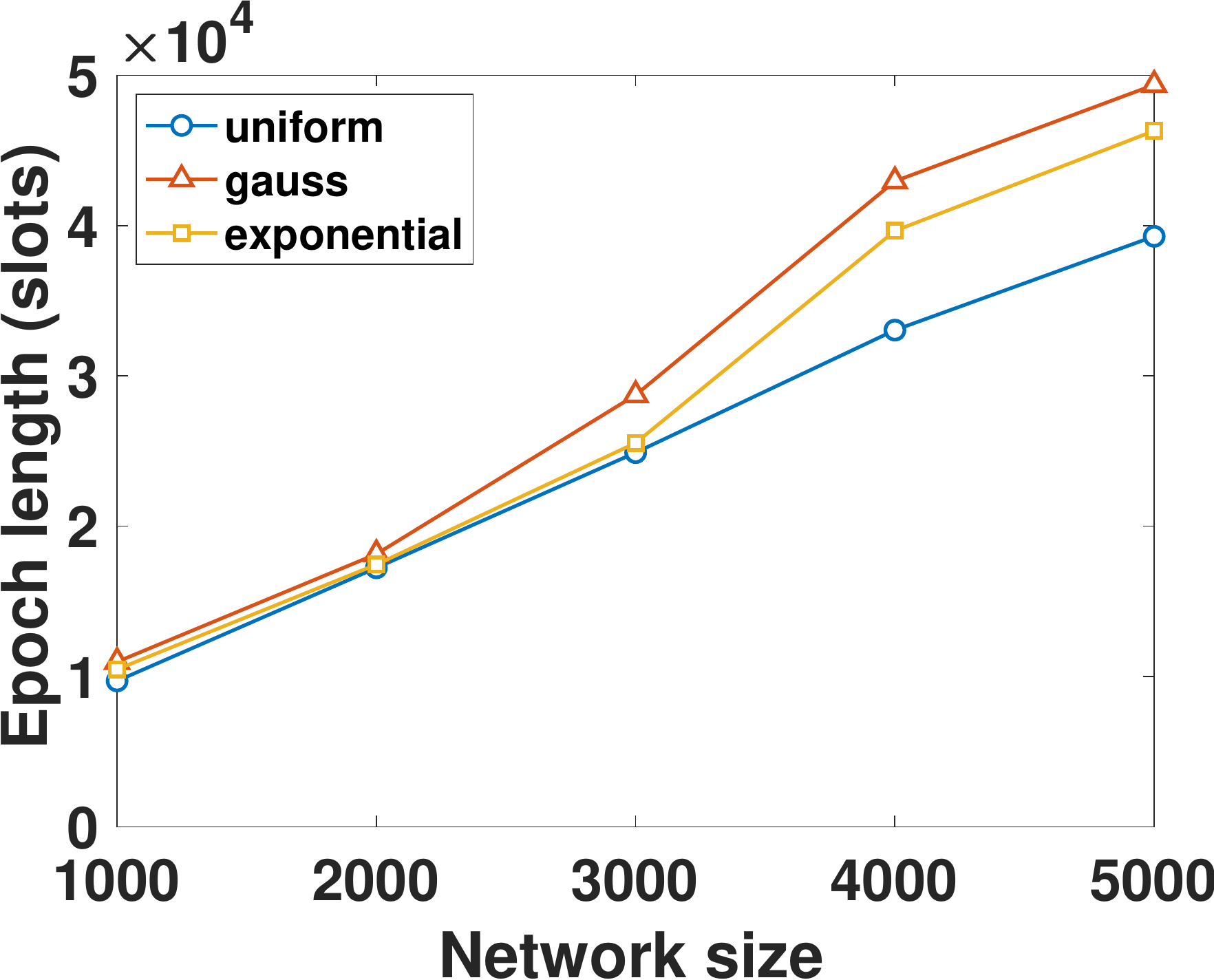}
}%
\hspace{.4in}
\subfigure[$150\times150$ plane]{
    \label{fig:tps_n_uge}
    \centering
    \includegraphics[width=2.8in]{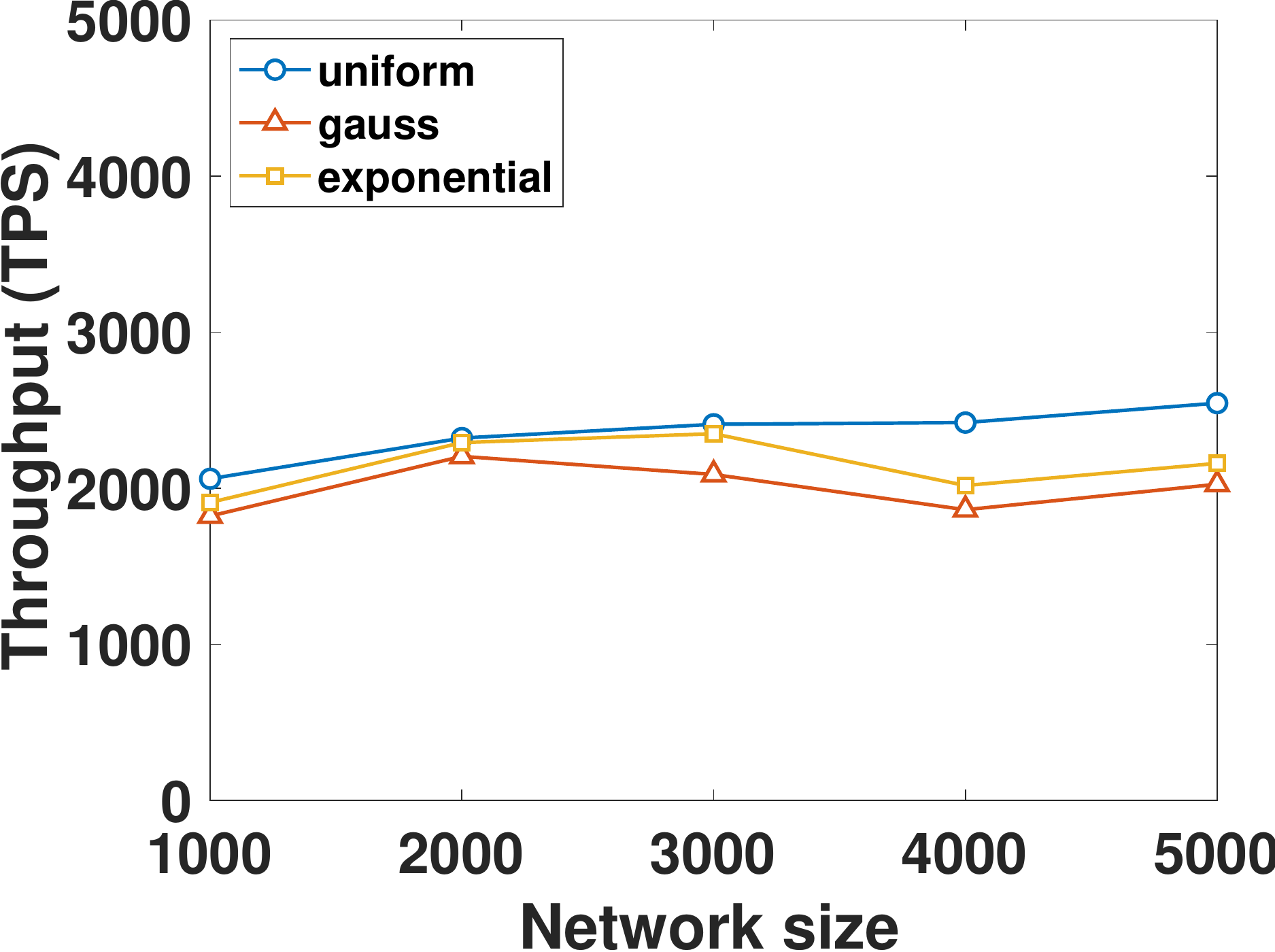}
}%

\subfigure[$N=2000$]{
    \label{fig:el_gamma_uge}
    \centering
    \includegraphics[width=2.7in]{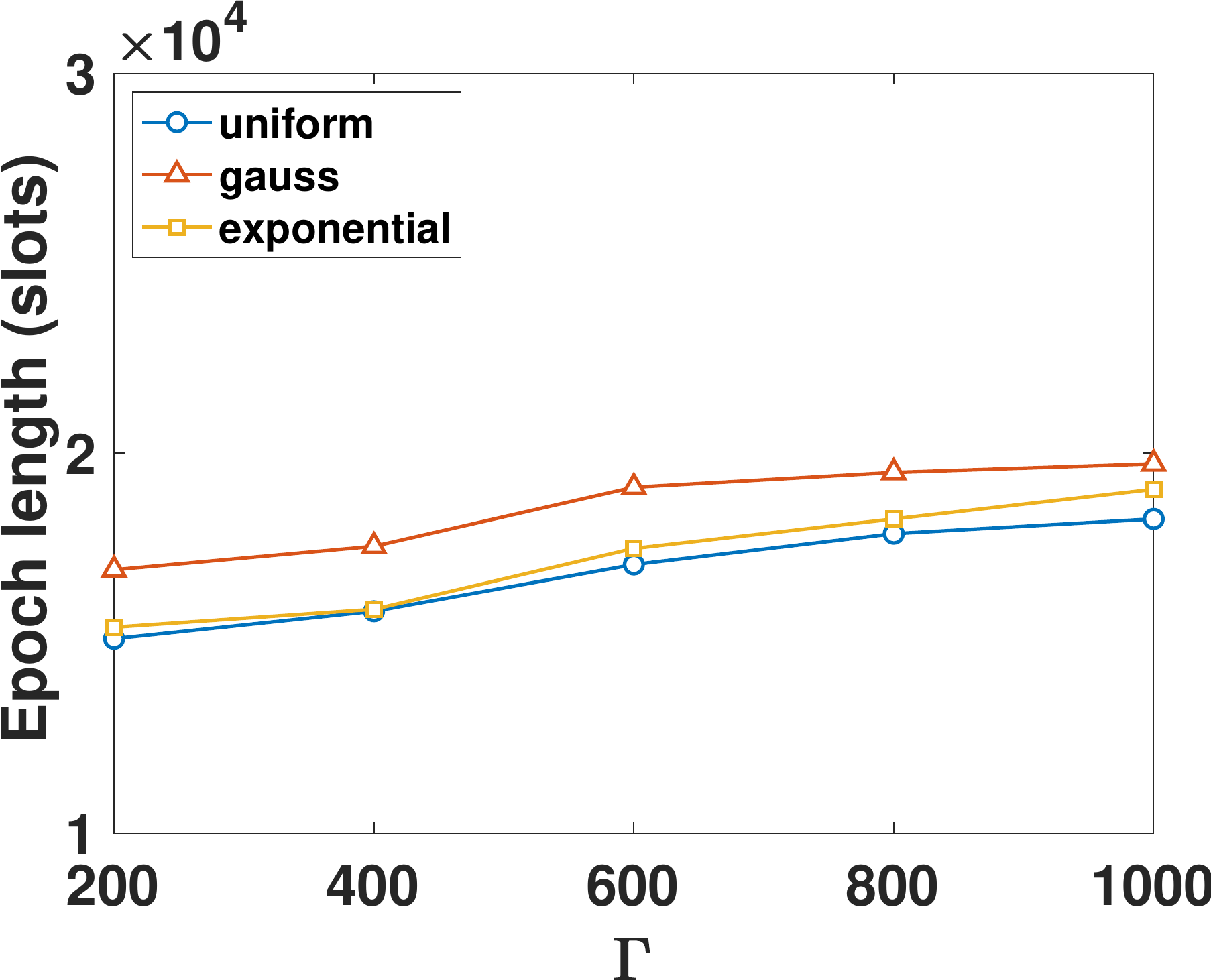}
}%
\hspace{.4in}
\subfigure[$N=2000$]{
    \label{fig:tps_gamma_uge}
    \centering
    \includegraphics[width=2.8in]{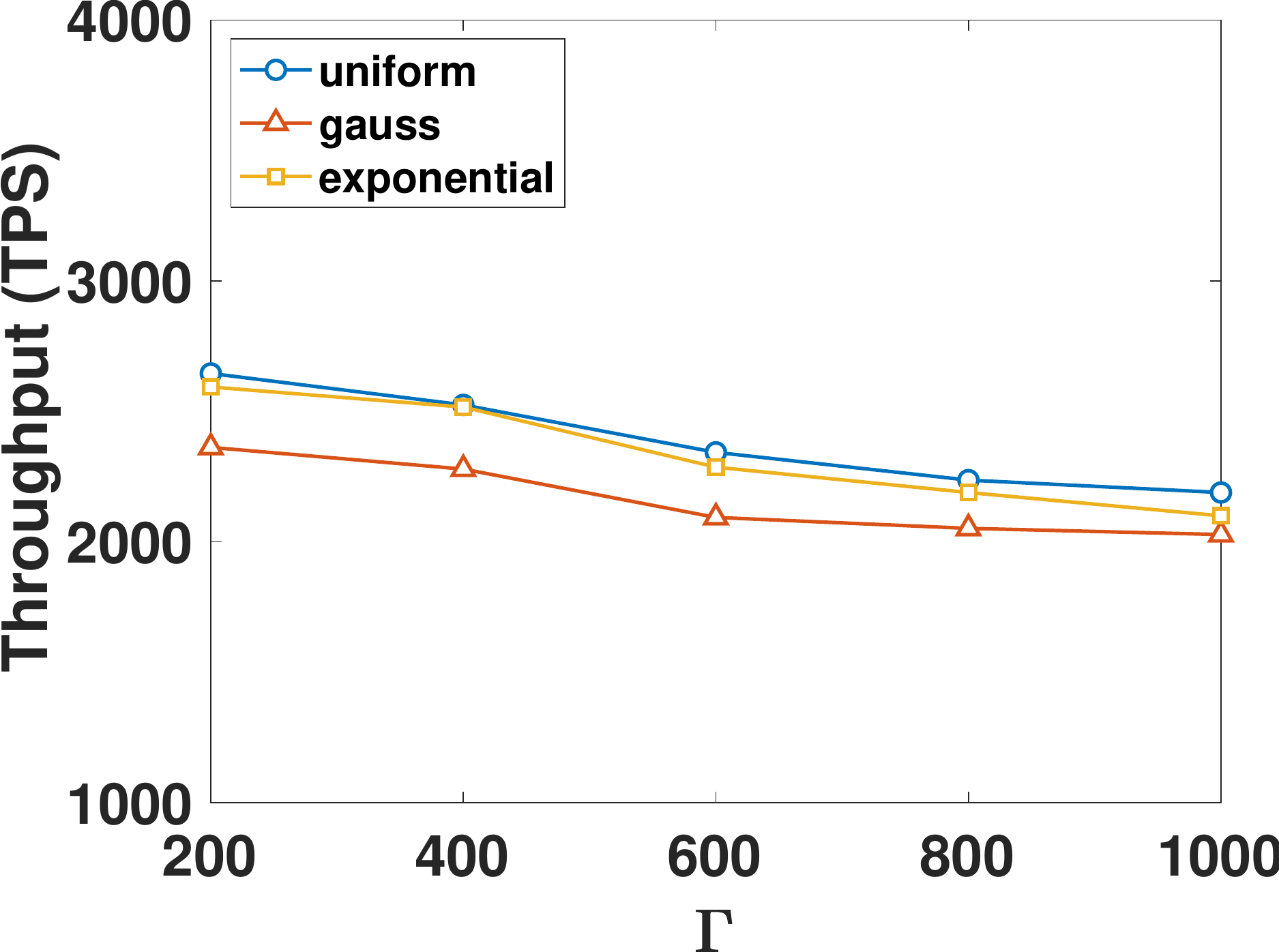}
}%
\caption{The performance of $\mathit{wChain}$ \emph{vs.} the network size $N$ and $\Gamma$ (under uniform, normal, and exponential distributions).}
\label{fig:N:gamma}
\end{figure*}

In this section, we conduct simulation experiments to validate the performance of $\mathit{wChain}$. The impacts of various parameters are investigated, including the SINR model parameters, network size $N$, and $\Gamma$, the ratio of the maximum distance to the minimum distance between nodes. If not stated otherwise, we adopt the following parameter settings: $\alpha=3$, $\beta=3, s=100, \lambda'=25, \mathcal{N}=1, P_i=2\mathcal{N}\beta 2^{i\alpha}$. The frequency of node crashes is set to be $1\%\times N$ nodes per second. To evaluate the performance, we adopt two metrics, namely \emph{epoch length} and \emph{throughput}. The epoch length is the number of slots within an epoch; given that the unit slot time for IEEE 802.11 is set to be $50\mu s$, one can calculate throughput as
\begin{equation}
\text{Throughput} = \frac{\textit{The number of transactions}}{\textit{Epoch length}\times 50\mu s}, 
\end{equation}
hence the unit of throughput is transactions per second (TPS). 

The simulation program is written in C and all the experiments are performed under a CentOS 7 operating system running on a machine with an Intel Xeon 3.4 GHz CPU, 120 GB RAM, and 1 TB SATA Hard Drive. Over 20 runs are carried out to get the average for each result. 

\begin{figure*}[!htbp]
    \centering
    \subfigure[$150\times150$ plane]{
        \label{fig:el_n_ab}
        \centering
        \includegraphics[width=2.7in]{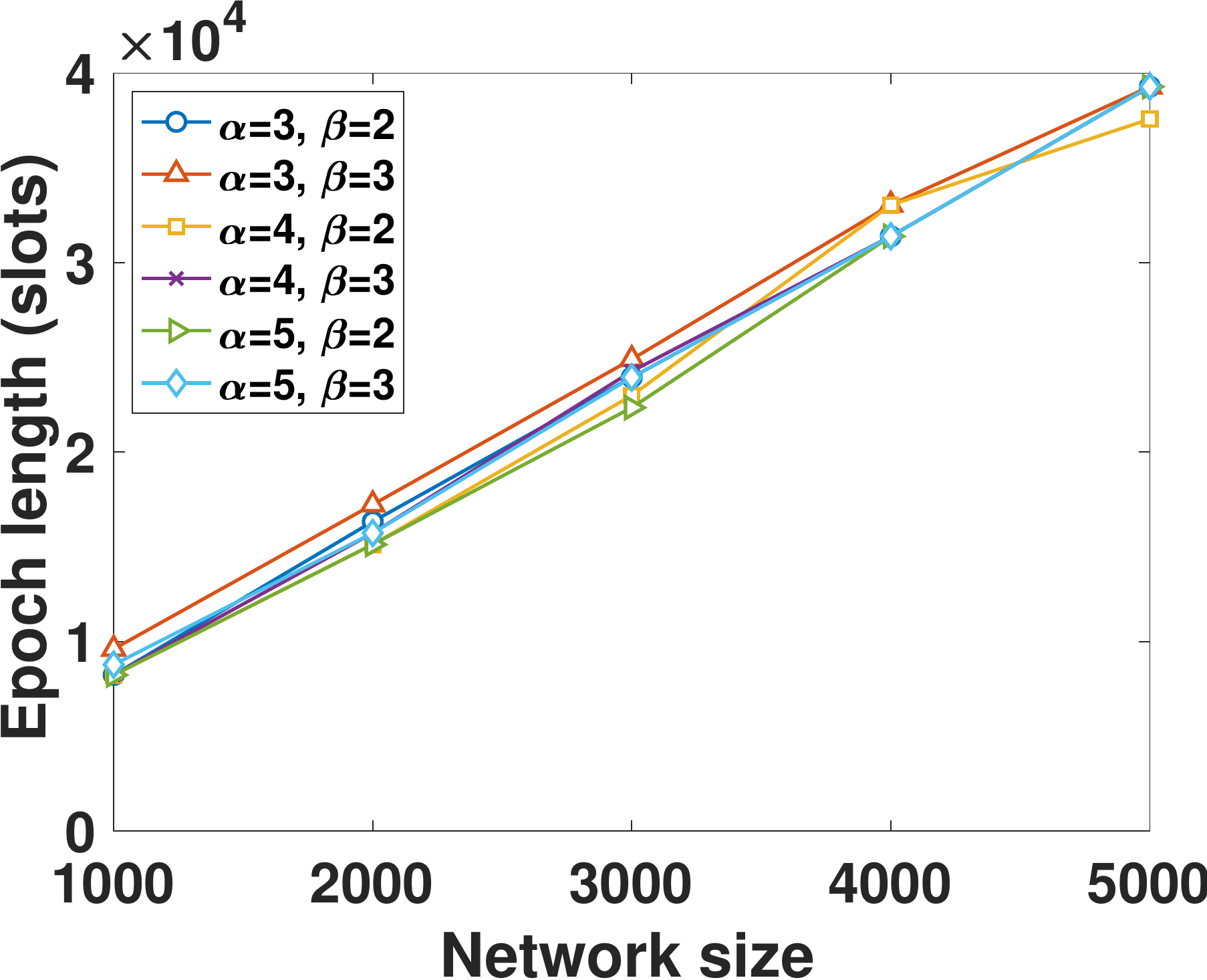}
    }%
    \hspace{.4in}
    \subfigure[$150\times150$ plane]{
        \label{fig:tps_n_ab}
        \centering
        \includegraphics[width=2.8in]{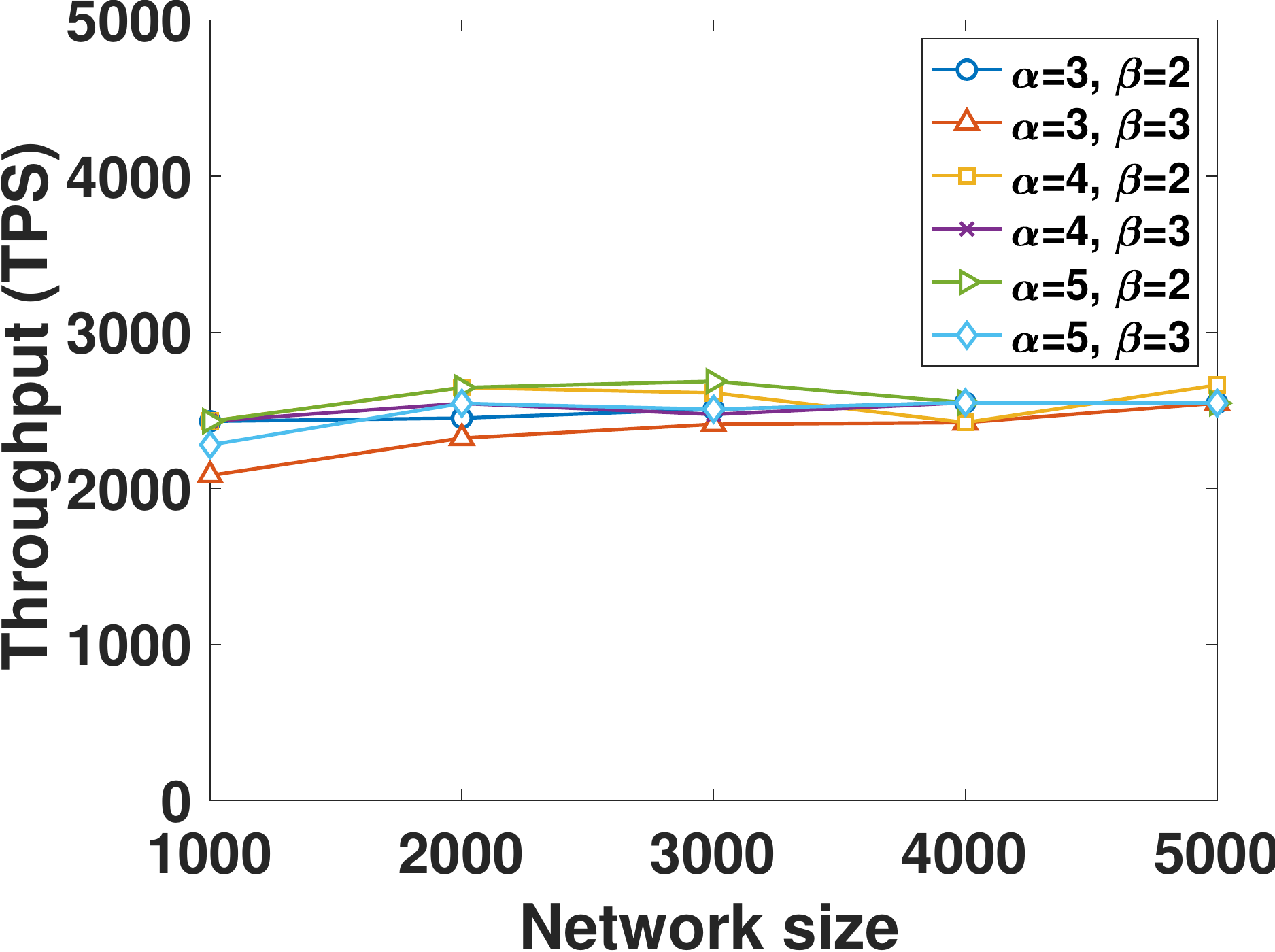}
    }%

    \subfigure[$N=2000$]{
        \label{fig:el_gamma_ab}
        \centering
        \includegraphics[width=2.7in]{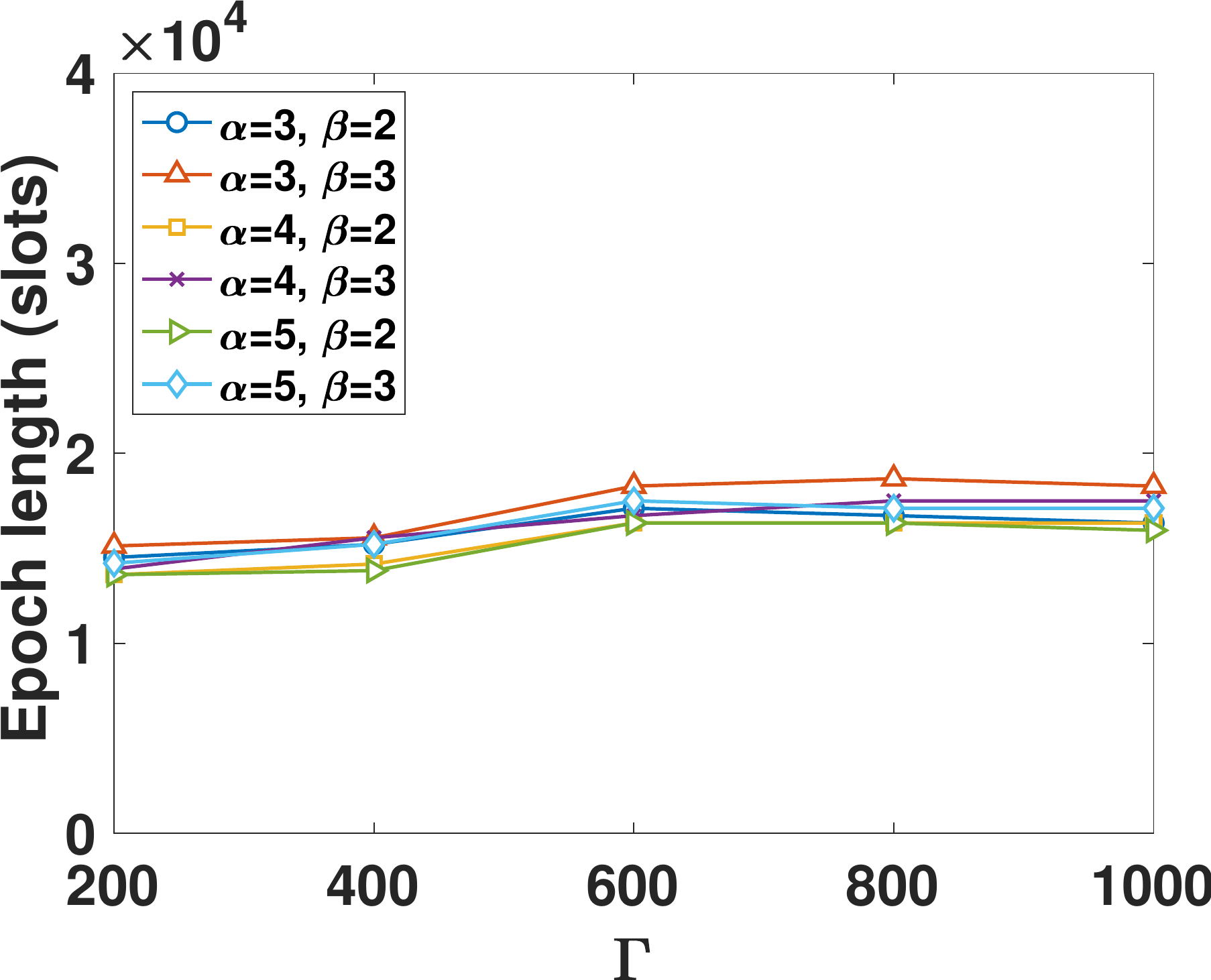}
    }%
    \hspace{.4in}
    \subfigure[$N=2000$]{
        \label{fig:tps_gamma_ab}
        \centering
        \includegraphics[width=2.8in]{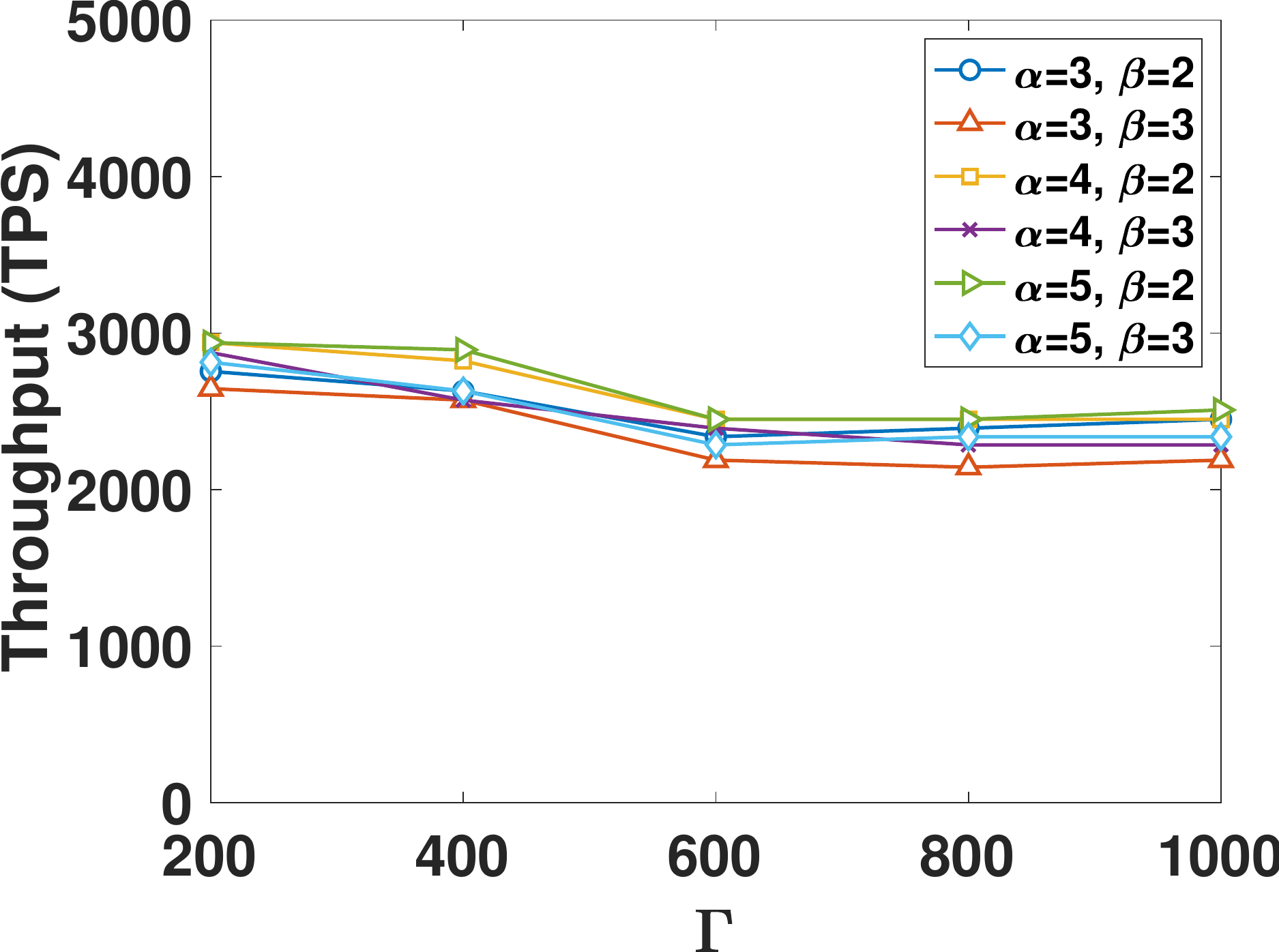}
    }%
    \caption{The performance of $\mathit{wChain}$ with various $\alpha$ and $\beta$ (under a uniform distribution).}
\end{figure*}

\subsection{Impacts of Network Size and $\Gamma$}

Since liveness is mainly determined by $N$ and $\Gamma$, we first study the impacts of $N$ and $\Gamma$ on the performance of $\mathit{wChain}$. We consider three types of distributions, namely uniform, normal, and exponential. To investigate the impacts of $N$, we adopt parameters $\alpha=3, \beta=3$ for the SINR model, and the plane is of size $150\times150$. The results are reported in Fig.~\ref{fig:N:gamma}.

One can observe  from Fig.~\ref{fig:el_n_uge} that the epoch length increases with $N$. With a uniform distribution and $N=5000$, it is 49364 slots (about 2.47s). Under normal and exponential distributions, the nodes have larger epoch lengths since they are denser in the center or the corner. They may suffer from heavier contention and spend more time transmitting a message. This result is consistent with our model assumption which states that the network density should be limited. 

Fig.~\ref{fig:tps_n_uge} indicates that under uniform distributions $\mathit{wChain}$ has the highest throughput, which only increases with $N$. When $N=5000$, the throughput with uniform distributions reaches 2546 TPS and is about 28\% higher than that for normal distributions. Under normal distributions, the throughput can reach 1986 TPS when $N=5000$. Note that that the throughput under normal or exponential distributions has a small decrease from $N=3000$. This is because the density is so high in some areas with a large number of nodes that contend heavily, negatively affecting throughput. 

Then we investigate the impact of $\Gamma$ and set $N=2000$. In Fig.~\ref{fig:el_gamma_uge}, with an increasing $\Gamma$, the epoch length increases since the spanner has more levels. The epoch lengths under normal and exponential distributions are still larger than the one under uniform distributions, which is caused by the same reason as Fig.~\ref{fig:el_n_uge}. 

Concerning that the throughput is a function of $\Gamma$, our protocol running under uniform distributions yields the largest throughput. However, the throughput decreases with a larger $\Gamma$ because a larger $\Gamma$ indicates that the spanner has more levels, and the data aggregation process takes a longer time. 

\subsection{Impacts of the SINR model parameters}

We perform four experiments to explore how the SINR model parameters, namely $\alpha$ and $\beta$, impact the performance of $\mathit{wChain}$. Assume that nodes are uniformly distributed in the plane. The combinations of $\alpha=3, 4, 5$ and $\beta=2, 3$ are tested with different $N$ and $\Gamma$ in our experiments. 

As shown in Fig.~\ref{fig:el_n_ab}, the epoch lengths scale from about $1\times10^4$ to $4\times10^4$ slots with $N$ under different settings of $\alpha$ and $\beta$ but they differ very little for different $\alpha$ and $\beta$ and the same $N$. 
In \ref{fig:tps_n_ab}, even though we observe a small throughput decrease when $\alpha=3$ and $\beta=3$, the epoch length and throughput are independent of the values of $\alpha$ and $\beta$ with different $N$. 
In Fig.~\ref{fig:el_gamma_ab} and \ref{fig:tps_gamma_ab}, the epoch length and throughput slightly change with different $\alpha$ and $\beta$. This is because $\alpha$, as the path-loss exponent, works closely with the distance between nodes as well as $\Gamma$. The impacts of $\alpha$ and $\beta$ on throughput are in an allowable range so that one can claim that the performance of $\mathit{wChain}$ is insensitive to them.

\section{Conclusion and Future Research}
\label{sec:discussion}
In this paper, we propose a fast fault-tolerant blockchain protocol, namely $\mathit{wChain}$, which can ensure the fast data aggregation leveraging a spanner as the communication backbone. The runtime upper bound of our protocol is $O(\log N\log\Gamma)$ when crash failures happen in a low frequency, and the worst-case upper bound of $\mathit{wChain}$ is $O(f\log N\log\Gamma)$. Besides, $\mathit{wChain}$ tolerates at most $f=\lfloor \frac{N}{2} \rfloor$ faulty nodes and is capable of handling node recovery while satisfying persistence and liveness, the two crucial properties for a blockchain protocol to function well. Both theoretical analysis and simulation studies are conducted to validate our design. On the last point, we only grapple with crash failures in this paper, but it would be interesting to consider Byzantine fault-tolerance in wireless networks. It is also worthy of investigating the impacts of mobility in ad hoc wireless networks.

\section*{Acknowledgment}
This study was partially supported by the National Natural Science Foundation of China under Grants 61771289, 61871466, 61832012, and 61672321, the Key Science and Technology Project of Guangxi under Grant AB19110044, and the US National Science Foundation under grants IIS-1741279 and CNS-1704397.

\bibliographystyle{IEEEtran}
\bibliography{references}

\section{Appendix}

\subsection{Chernoff Bound}
\label{sub:chernoff}

\begin{lemma}{(Chernoff Bound).}
\label{lemma:chernoff}
Given a set of independent binary random variables $X_1, X_2,\cdots, X_n$, let $X=\sum_1^n X_i$ and $\mu=\sum_1^n p_i$, where $X_i=1$ with probability $p_i$. If $\mathbb{E}\left[\prod_{i\in S} X_i \leq \prod_{i\in S} q_i \right]$, where $S\subseteq\{0,1,\cdots, n\}$, then it holds for any $\delta>0$ that
\begin{equation*}
\begin{aligned}
P_{r}[X\geq(1+\delta)\mu]\leq e^{-\frac{\delta^{2}\mu}{2(1+\delta/3b)}}.
\end{aligned}
\end{equation*}
If $\mathbb{E}\left[\prod_{i\in S} X_i \geq \prod_{i\in S} q_i \right]$, where $S\subseteq\{0,1,\cdots, n\}$, then for any $\delta\in(0,1]$, we have
\begin{equation*}
\begin{aligned}
P_{r}[X\leq(1-\delta)\mu]\leq e^{-\frac{\delta^{2}\mu}{2}}.
\end{aligned}
\end{equation*}
\end{lemma}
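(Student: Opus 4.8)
The plan is to prove both tails by the classical exponential-moment (Chernoff) method, taking care that the factorization of the moment generating function is powered by the stated product-moment hypotheses rather than by a bare appeal to independence. I read the two side conditions as $\mathbb{E}\big[\prod_{i\in S}X_i\big]\le\prod_{i\in S}q_i$ and $\mathbb{E}\big[\prod_{i\in S}X_i\big]\ge\prod_{i\in S}q_i$, with $q_i=p_i$ so that $\sum_i q_i=\mu$. The key observation that organizes the whole argument is that under the verbatim independence assumption both conditions collapse to the equality $\mathbb{E}[\prod_{i\in S}X_i]=\prod_{i\in S}p_i$, in which case every factorization below holds as an exact identity and the proof is the textbook one; the product conditions are what keep the argument alive in the (weaker, negatively-correlated) generality.

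For the upper tail, I would fix $t>0$ and apply Markov's inequality to $e^{tX}$, giving $P_r[X\ge(1+\delta)\mu]\le e^{-t(1+\delta)\mu}\,\mathbb{E}[e^{tX}]$. The heart of the step is to control $\mathbb{E}[e^{tX}]$. Since each $X_i$ is $0/1$ we have $e^{tX_i}=1+(e^{t}-1)X_i$, so expanding the product over $i$ yields $e^{tX}=\sum_{S}(e^{t}-1)^{|S|}\prod_{i\in S}X_i$. Because $t>0$ makes every coefficient $(e^{t}-1)^{|S|}$ nonnegative, I may take expectations and invoke the $\le$ hypothesis term by term, obtaining $\mathbb{E}[e^{tX}]\le\prod_i\big(1+(e^{t}-1)q_i\big)\le\exp\big((e^{t}-1)\mu\big)$ via $1+x\le e^{x}$. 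Substituting and minimizing the exponent over $t$ (the minimizer is $t=\ln(1+\delta)$) gives the canonical bound $\big(e^{\delta}/(1+\delta)^{1+\delta}\big)^{\mu}$. It then remains to establish the elementary inequality $(1+\delta)\ln(1+\delta)-\delta\ge\delta^{2}/(2+2\delta/3)$ for $\delta\ge0$, which matches the exponent in the statement; I would prove it by letting $h(\delta)$ be the difference of the two sides, checking $h(0)=h'(0)=0$, and showing $h''\ge0$ on $[0,\infty)$ by a short monotonicity check.

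The lower tail proceeds symmetrically: for $t>0$, Markov applied to $e^{-tX}$ gives $P_r[X\le(1-\delta)\mu]\le e^{t(1-\delta)\mu}\,\mathbb{E}[e^{-tX}]$, and the same subset expansion produces $\mathbb{E}[e^{-tX}]=\sum_{S}(e^{-t}-1)^{|S|}\mathbb{E}\big[\prod_{i\in S}X_i\big]$. Granting the factorized bound $\mathbb{E}[e^{-tX}]\le\prod_i\big(1+(e^{-t}-1)q_i\big)\le\exp\big((e^{-t}-1)\mu\big)$, I would minimize the exponent $t(1-\delta)\mu+(e^{-t}-1)\mu$ over $t$ (minimizer $t=-\ln(1-\delta)>0$), collapsing it to $-\mu\big[(1-\delta)\ln(1-\delta)+\delta\big]$. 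The series $\ln(1-\delta)=-\sum_{k\ge1}\delta^{k}/k$ gives, after a term-by-term comparison, $(1-\delta)\ln(1-\delta)+\delta\ge\delta^{2}/2$ for $\delta\in(0,1]$, which yields exactly the stated $e^{-\delta^{2}\mu/2}$.

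The one genuinely delicate point is the lower-tail factorization. Since $e^{-t}-1\in(-1,0)$, the coefficients $(e^{-t}-1)^{|S|}$ alternate in sign with $|S|$, so a naive term-by-term application of the $\ge$ hypothesis on the $X_i$ does \emph{not} bound the sum, and this is the main obstacle. I expect to resolve it by observing that the correct carrier of the lower-tail factorization is a product bound on the complements $Y_i=1-X_i$, namely $\mathbb{E}[\prod_{i\in S}Y_i]\le\prod_{i\in S}(1-q_i)$, which restores nonnegative coefficients and makes the factorization survive as an inequality (this is the content of the Panconesi--Srinivasan extension of the Chernoff--Hoeffding bounds). Under the literal independence hypothesis of the lemma this subtlety evaporates entirely, because the moment generating function factorizes as an exact product and both tails follow immediately from the two canonical bounds together with the two elementary convexity estimates above; I would nevertheless route the write-up through the product conditions so that the lemma holds in the stated generality.
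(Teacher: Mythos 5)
The paper offers no proof of this lemma at all: it is quoted in the appendix as a ready-made tool (it is, modulo typos, the Panconesi--Srinivasan extension of the Chernoff--Hoeffding bounds), so there is no in-paper argument to compare yours against, and supplying the standard proof is the right move. Your proof is the canonical exponential-moment argument and is correct as a proof of the lemma under its literal independence hypothesis. Both elementary estimates you invoke are sound: for the upper tail, $(1+\delta)\ln(1+\delta)-\delta\ge\delta^{2}/(2+2\delta/3)$ follows exactly as you propose, since $h''(\delta)=\frac{1}{1+\delta}-\frac{27}{(3+\delta)^{3}}\ge 0$ reduces to $(3+\delta)^{3}\ge 27(1+\delta)$; for the lower tail, your series comparison gives $(1-\delta)\ln(1-\delta)+\delta=\sum_{k\ge 2}\frac{\delta^{k}}{k(k-1)}\ge\delta^{2}/2$. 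You also correctly diagnose the statement's blemishes: the misplaced brackets in $\mathbb{E}\left[\prod_{i\in S}X_i\le\prod_{i\in S}q_i\right]$ and the stray $b$ in the exponent, which your derivation pins to $b=1$.

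One sharpening of your final paragraph is needed. You are right that the alternating signs block a term-by-term lower-tail factorization, and right that the complement condition $\mathbb{E}\left[\prod_{i\in S}(1-X_i)\right]\le\prod_{i\in S}(1-q_i)$ is the correct hypothesis for it; but that condition does \emph{not} follow from the stated condition $\mathbb{E}\left[\prod_{i\in S}X_i\right]\ge\prod_{i\in S}q_i$, and in fact the lower-tail bound is false in that generality. Take $X_1=\cdots=X_n$ all equal to a single Bernoulli$(1/2)$ variable and $q_i=1/2$: then $\mathbb{E}\left[\prod_{i\in S}X_i\right]=1/2\ge 2^{-|S|}$ for every nonempty $S$, yet with $\mu=n/2$ one has $P_r[X\le(1-\delta)\mu]=P_r[X=0]=1/2$, while the claimed bound $e^{-\delta^{2}n/4}$ tends to $0$ (already $e^{-3/4}<1/2$ at $\delta=1$, $n=3$). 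So your stated plan to ``route the write-up through the product conditions so that the lemma holds in the stated generality'' cannot be carried out for the lower tail: you must either retain independence (under which, as you note, the moment generating function factorizes exactly and both tails are immediate) or replace the $\ge$ hypothesis by the complement product condition, which is what the Panconesi--Srinivasan formulation actually assumes. Since the lemma does assume independence, your proof stands; just delete or reframe the claim of extra generality on the lower-tail side.
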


\end{document}